\documentclass[acmsmall,nonacm,screen]{acmart}

\setcopyright{none}                 
\renewcommand\footnotetextcopyrightpermission[1]{}  
\usepackage{amsmath,amsthm}

\newtheorem{theorem}{Theorem}
\newtheorem{lemma}[theorem]{Lemma}

\theoremstyle{definition}
\newtheorem{definition}[theorem]{Definition}
\newtheorem{example}[theorem]{Example}

\usepackage[dvipsnames,svgnames]{xcolor}
\hypersetup{colorlinks=true, citecolor=DodgerBlue}
\usepackage[inline]{enumitem}
\usepackage{multirow}
\usepackage{multicol}
\usepackage{tabu}
\usepackage{longtable}
\newcolumntype{L}{>{$}l<{$}} 
\newcolumntype{C}{>{$}c<{$}} 
\newcolumntype{R}{>{$}r<{$}}
\newcounter{rownum}
\newcommand{\rownum}{\stepcounter{rownum}\therownum}

\usepackage{subcaption}
\DeclareCaptionFormat{custom}{
    \mbox{\sffamily\upshape\bfseries {#1 #2}} {#3}
}
\usepackage{ebproof}
\ebproofset{
    rule thickness=0.7pt,
    left label template=\footnotesize$\inserttext$,
    right label template=\footnotesize$\inserttext$,
    }

\usepackage{algorithm}
\usepackage[indLines=false]{algpseudocodex}
\usepackage{etoolbox}   
\algnewcommand\algorithmicswitch{\textbf{switch}}
\algnewcommand\algorithmiccase{\textbf{case}}
\algnewcommand\algorithmicdefaultcase{\textbf{default}}
\makeatletter
\algdef{SE}[SWITCH]{Switch}{EndSwitch}[1]{%
	\algpx@startCodeCommand\algpx@startIndent\algorithmicswitch\ #1%
}{%
	\algpx@startEndBlockCommand\algpx@endIndent\algorithmicend\ \algorithmicswitch%
}
\algdef{SE}[CASE]{Case}{EndCase}[1]{%
	\algpx@startCodeCommand\algpx@startIndent\algorithmiccase\ #1\textbf{:}%
}{%
	\algpx@startEndBlockCommand\algpx@endIndent\algorithmicend\ \algorithmiccase%
}
\algdef{SE}[DEFAULTCASE]{Default}{EndDefault}{%
	\algpx@startCodeCommand\algpx@startIndent\algorithmicdefaultcase\textbf{:}%
}{%
	\algpx@startEndBlockCommand\algpx@endIndent\algorithmicend\ \algorithmicdefaultcase%
}
\pretocmd{\Switch}{\algpx@endCodeCommand}{}{}
\pretocmd{\EndSwitch}{\algpx@endCodeCommand[0]}{}{}
\pretocmd{\Case}{\algpx@endCodeCommand}{}{}
\pretocmd{\EndCase}{\algpx@endCodeCommand[0]}{}{}
\pretocmd{\Default}{\algpx@endCodeCommand}{}{}
\pretocmd{\EndDefault}{\algpx@endCodeCommand[0]}{}{}
\makeatother
\algtext*{EndCase}

\renewcommand{\phi}{\varphi}
\renewcommand{\epsilon}{\varepsilon}
\renewcommand{\le}{\leqslant}
\renewcommand{\ge}{\geqslant}

\newcommand{\defemph}[1]{{\rmfamily\itshape \color{MediumVioletRed} {#1}}}
\newcommand{\mathsc}[1]{\mbox{\rmfamily\mdseries\upshape\scshape {#1}}}
\newcommand{\closedAss}[2]{\infer0[{#1}]{\hspace{0.35em}{#2}\hspace{0.35em}}}

\newcommand{\taggedAbs}[2]{\infer[left label={#1}]1[\rnabs{}]{{#2}}}
\newcommand{\taggedCase}[2]{\infer[left label={#1}]3[\rncase]{{#2}}}

\newcommand{\taggedRaa}[2]{\infer[left label={#1}]1[\rnraa{}]{{#2}}}

\newcommand{\impl}{\rightarrow}
\newcommand{\disj}{\vee}
\newcommand{\conj}{\wedge}
\newcommand{\atoms}{\mathsf{PV}}

\newcommand{\formsize}[1]{\vert{#1\vert}}

\newcommand{\intlogic}{\mathsf{NJ}}
\newcommand{\classlogic}{\mathsf{NK}}
\newcommand{\rnax}[1]{\mathsc{ass}_{{#1}}}
\newcommand{\rnemp}[1]{\mathsc{emp}_{{#1}}}
\newcommand{\rnraa}[1]{\mathsc{raa}_{{#1}}}
\newcommand{\rnabs}[1]{\mathsc{abs}_{{#1}}}
\newcommand{\rnapp}{\mathsc{app}}
\newcommand{\rnpair}{\mathsc{pair}}
\newcommand{\rnfst}{\mathsc{fst}}
\newcommand{\rnsnd}{\mathsc{snd}}
\newcommand{\rnleft}[1]{\mathsc{left}_{{#1}}}
\newcommand{\rnright}[1]{\mathsc{right}_{{#1}}}
\newcommand{\rncase}{\mathsc{case}}
\newcommand{\rnrule}{\mathsc{r}}
\newcommand{\sfof}{\mathsf{sf}}
\newcommand{\clof}{\mathsf{cl}}
\newcommand{\lrof}{\mathsf{lr}}
\newcommand{\assof}{\mathsf{ass}}
\newcommand{\concof}{\mathsf{conc}}
\newcommand{\sizeof}{\mathsf{sz}}
\newcommand{\degof}{\mathsf{deg}}
\newcommand{\spof}{\mathsf{sp}}
\newcommand{\formsof}{\mathsf{forms}}
\newcommand{\intDerives}{\vdash_{\mathsf{int}}}

\newcommand{\classDerives}{\vdash_{\mathsf{c}}}

\newcommand{\graft}[3]{{#1}[{#2} \mapsto {#3}]}
\newcommand{\cutrank}{\mathsf{cr}}
\newcommand{\rankadd}{\ast}
\newcommand{\degree}{\mathsf{deg}}
\newcommand{\cutform}{\mathsf{cf}}
\newcommand{\cutwt}{\mathsf{wt}}
\newcommand{\maxdeg}{\mathsf{maxdeg}}
\newcommand{\contractum}{\mathsf{ctm}}
\newcommand{\oneStep}{\rightsquigarrow}
\newcommand{\os}{\textsc{step}}
\newcommand{\eliminator}{\mathsf{E}}
\newcommand{\elimbot}{\mathsf{R}}
\newcommand{\elimneg}[1]{\mathsf{N}({{#1}})}
\newcommand{\elimapp}[2]{\mathsf{A}({{#1}}, {{#2}})}
\newcommand{\elimfst}[1]{\mathsf{F}({{#1}})}
\newcommand{\elimsnd}[1]{\mathsf{S}({{#1}})}
\newcommand{\elimcase}[2]{\mathsf{C}({{#1}}, {{#2}})}
\newcommand{\elimtrans}[2]{[{#1}]{#2}}

\title{{New Proofs of Weak Normalization for Propositional Logic}} 
\author{S P Suresh}
\email{spsuresh@cmi.ac.in}
\affiliation{%
  \institution{Chennai Mathematical Institute}
  \city{Chennai}
  \country{India}
}
\date{}

\begin{document}
\begin{abstract}
    We present new proofs of weak normalization for intuitionistic and classical propositional logics (with the full set of operators $\{\bot, \impl, \conj, \disj\}$). These proofs work with cuts rather than cut segments, and they provide explicit ``local'' rules for determining whether to contract a whole proof or reduce one of its subproofs, and in the latter case, which subproof to reduce. Interestingly, much of the complication in the case of intuitionistic logic is due to the disjunction elimination rule, while our version of the same rule for classical logic has $\bot$ as conclusion always, and so is much easier to handle. All the complication in the case of classical logic shifts to cuts involving $\rnraa{}$, the \emph{reductio ad absurdum} rule. We also discuss a formalization of the entire proof in Lean, and present a deterministic algorithm for weak normalization.
\end{abstract}
\maketitle
\raggedbottom

\section{Introduction}\label{sec:intro}
\defemph{Natural Deduction} is a fundamental framework for logical proofs, with an extremely rich history. It was introduced by Gerhard Gentzen (and independently by Stanis{\l}aw Ja{'s}kowski) in the 1930s~\cite{Gentzen1935a,Gentzen1935b,Gentzen1969,Jaskowski1934}. Gentzen also introduced the notions of \defemph{Normalization} and \defemph{Subformula Property}, and introduced the allied system of \defemph{Sequent Calculus} and proved the fundamental \defemph{Cut Elimination Theorem}, which is closely related to normalization. He used this technical machinery to provide an astounding proof of the consistency of Peano arithmetic. These systems have been studied extensively for nearly a century. Further advances in Natural Deduction were by Dag Prawitz, who provided direct proofs of weak and strong normalization for Natural Deduction~\cite{Prawitz1965,Prawitz1971}. These results are so fundamental that they form an integral part of many introductory textbooks on proof theory -- see \cite{GirardLafontTaylor1989, TroelstraSchwichtenberg2000, NegriVonPlato2001, MancosuGalvanZach2021}, for example.

Another important strand of research concerns the \defemph{Curry-Howard correspondence}~\cite{Howard1980}, which establishes a deep connection between intuitionistic logic and typed $\lambda$-calculus. This is also the focus of many textbooks~\cite{GirardLafontTaylor1989, SorensenUrzyczyn2006}. This strand has culminated in the development and widespread use of \defemph{proof assistants} like Rocq~\cite{CoquandHuet1988, RocqWebsite} and Lean~\cite{LeanWebsite}.

\subsection{Motivation for this work}
The key idea of weak normalization is quite intuitive for the fragment of the logic with only implication (or equivalently, simply typed $\lambda$-calculus with only abstraction and application). The idea is to identify subproofs (equivalently, subexpressions) called \defemph{cuts} or \defemph{redexes}. These are subproofs where an implication is eliminated immediately after being introduced (equivalently, expressions of the form $(\lambda{x}\cdot{M})\,N$). We then associate a \defemph{degree} to each cut, and repeatedly simplify the rightmost innermost cut of maximal degree in a proof. This simplification increases the size of the proof in most cases, but we identify a different measure (the \defemph{cut-rank} of the proof) which decreases at each simplification step. 

This template also works when we add conjunction to the syntax, but disjunction complicates the proof considerably. One is forced to work with \defemph{cut-segments} instead of cuts. Rather than simplifying the rightmost innermost cut of maximum degree at each step, one has to identify an appropriate cut segment to reduce next. It is quite hard to precisely state the conditions that identify the next cut segment to be reduced, and the proof becomes quite unwieldy. 

To our knowledge, texts treat the proof of weak normalization for full natural deduction in three ways:
\begin{enumerate}
    \item Present the full proof for the fragment with only implication and conjunction, and present the additional permutative conversions needed for the full proof, without presenting an explicit reduction strategy.
    \item Prove \emph{strong normalization} of a rich framework like System F, and derive weak normalization for various systems (including full propositional intuitionistic logic) as a consequence. 
    \item Present the original elaborate proofs involving cut-segments. This style of treatment can be seen in the recent excellent text~\cite{MancosuGalvanZach2021} by Mancosu, Galvan and Zach.
\end{enumerate}

\subsection{Our contribution}
In this paper, we present reasonably short and simple proofs of weak normalization for both $\intlogic$ (Natural Deduction for intuitionistic propositional logic) and $\classlogic$ (Natural Deduction for classical propositional logic). We have made an attempt to obtain inductive definitions of all the key concepts and transformations. The key technical contributions are the following:
\begin{itemize}
    \item We present the proof only in terms of cuts, rather than cut-segments.
    \item We present a new definition of cut rank for the case when the main subproof of a cut ends in $\disj$-elimination. 
    \item We define a notion of one-step reduction which encodes the normalization strategy. Importantly, this definition is ``local'' -- we determine whether to contract the whole proof at the root or reduce one of the immediate subproofs, and in the latter case, which subproof to reduce, by checking how various (numerical) parameters associated with the proof and immediate subproofs are related to each other. The elaborate presentation in~\cite{MancosuGalvanZach2021} was of immense help to us in deriving the conditions that determine the next reduction. 
    \item For classical logic, we consider a version of $\disj$-elimination which always has $\bot$ as conclusion. We can derive the standard $\disj$-elimination from the restricted version and $\rnraa{}$.
    \item All the complexity in the normalization proof for classical logic shifts to cuts involving $\rnraa{}$, and we handle them in the same inductive fashion by generalizing the reduction rules for $\lambda\mu$-calculus (as presented in Chapter~6 of~\cite{SorensenUrzyczyn2006}). 
\end{itemize}

Since this is a new approach to the proof of a fundamental theorem, we ensured that the approach is correct by formalizing both the proofs in Lean.\footnote{The formalization involved 2--3 months of intermittent effort by the author, and consists of about 5000 lines of code.} A particularly important aspect of the formalization is this -- apart from proving the existence of a normal derivation corresponding to each derivation, we also transcribe our rules for reduction into a terminating, deterministic \emph{normalization algorithm}. The formalization is hosted at~\cite{NatDed}. 

\subsection{Structure of the paper} 
We present the basic definitions for the system $\intlogic$ in Section~\ref{sec:nj}, and discuss the challenges associated with normalization in Section~\ref{sec:challenge}. We then present our simplified proof for weak normalization in Section~\ref{sec:njNorm}. We then briefly discuss aspects of our Lean formalization in Section~\ref{sec:formalization}. The differences in the case of classical logic, and the new challenges that arise are discussed in Section~\ref{sec:nk}. The normalization proof for classical logic is presented in Section~\ref{sec:nkNorm}. We end with some concluding remarks. 

\section{\texorpdfstring{$\intlogic$}{\textsf{NJ}}: Natural Deduction for Intuitionistic Propositional Logic}\label{sec:nj}
Assume a countably infinite set $\atoms$ of \defemph{propositional variables}. The set of all formulas $\Phi$ is specified in BNF as follows:
\[
    \alpha, \beta \in \Phi \coloneq p\ \mid\ \bot\ \mid\ (\alpha\impl\beta)\ \mid\ (\alpha\conj\beta)\ \mid\ (\alpha\disj\beta).
\]
We use $p, q, r$ and variants to denote propositional variables, $\alpha, \beta, \gamma, \delta$ and variants to denote arbitrary formulas, and use $X, Y, Z$ and variants to denote arbitrary sets of formulas. As usual, $\neg\alpha$ abbreviates $\alpha\impl\bot$. 

The set of \defemph{subformulas} of $\alpha$, denoted $\sfof(\alpha)$, is the least set $X$ containing $\alpha$ s.t.\ $\{\beta,\gamma\} \subseteq X$ whenever $\beta\circ\gamma \in X$. The \defemph{size} of a formula $\alpha$, denoted $\formsize{\alpha}$, is $1$ if $\alpha \in \atoms \cup \{\bot\}$, and $1+\formsize{\beta}+\formsize{\gamma}$ if $\alpha = \beta\circ\gamma$. For a set of formulas $X$, $\sfof(X) \coloneq \bigcup_{\alpha \in X}\,\sfof(\alpha)$.

\begin{definition}[Proofs]
    The family of proofs $\{\Pi_{\alpha} \mid \alpha \in \Phi\}$ is defined by simultaneous induction as follows, where for each $\alpha \in \Phi$ and $\pi \in \Pi_{\alpha}$, $\pi$ is a proof with conclusion $\alpha$.
    \begin{itemize}[nosep]
        \item For all $\alpha \in \Phi$, $\rnax{\alpha} \in \Pi_{\alpha}$. 
        \item If $\mu \in \Pi_{\bot}$, then $\rnemp{\alpha}(\mu) \in \Pi_{\alpha}$.
        \item If $\mu \in \Pi_{\beta}$, then $\rnabs{\alpha}(\mu) \in \Pi_{\alpha\impl\beta}$.
        \item If $\mu \in \Pi_{\alpha\impl\beta}$ and $\nu \in \Pi_{\alpha}$, then $\rnapp(\mu,\nu) \in \Pi_{\beta}$.
        \item If $\mu \in \Pi_{\alpha}$ and $\nu \in \Pi_{\alpha}$, then $\rnpair(\mu,\nu) \in \Pi_{\alpha\conj\beta}$. 
        \item If $\mu \in \Pi_{\alpha\conj\beta}$, then $\rnfst(\mu) \in \Pi_{\alpha}$ and $\rnsnd(\mu) \in \Pi_{\beta}$. 
        \item If $\mu \in \Pi_{\alpha}$ and $\nu \in \Pi_{\beta}$, then $\rnleft{\beta}(\mu), \rnright{\alpha}(\nu) \in \Pi_{\alpha\disj\beta}$. 
        \item If $\chi \in \Pi_{\alpha\disj\beta}$ and $\mu,\nu \in \Pi_{\delta}$, then $\rncase(\chi,\mu,\nu) \in \Pi_{\delta}$. 
    \end{itemize}
\end{definition}
The above rules for constructing proof can also be presented succinctly in the form given by the following table, where we list the conditions under which we can define a new proof $\pi$, assuming that $\mu, \nu, \chi$ are proofs already defined. We simultaneously define $\concof(\pi)$, the \defemph{conclusion} of $\pi$.
\begin{longtable}{R|C|C}
    \pi & \text{Preconditions} & \concof(\pi) \\
    \hline
    \rnax{\alpha} & \text{None} & \alpha \\
    \rnemp{\alpha}(\mu) & \concof(\mu) = \bot & \alpha \\ 
    \rnabs{\alpha}(\mu) & \concof(\mu) = \beta & \alpha\impl\beta \\ 
    \rnapp(\mu,\nu) & \concof(\mu) = \alpha\impl\beta, \concof(\nu) = \alpha & \beta  \\
    \rnpair(\mu,\nu) & \concof(\mu) = \alpha, \concof(\nu) = \beta & \alpha\conj\beta \\
    \rnfst(\mu) & \concof(\mu) = \alpha\conj\beta & \alpha \\
    \rnsnd(\mu) & \concof(\mu) = \alpha\conj\beta & \beta \\
    \rnleft{\beta}(\mu) & \concof(\mu) = \alpha & \alpha\disj\beta \\
    \rnright{\alpha}(\mu) & \concof(\mu) = \beta & \alpha\disj\beta \\
    \rncase(\chi,\mu,\nu) & \concof(\chi) = \alpha\disj\beta, \concof(\mu) = \concof(\nu) = \delta & \delta \\
\end{longtable}
We define $\assof(\pi)$, the \defemph{set of assumptions} of $\pi$, in the table below. We use $X-\alpha$ etc.\ as shorthand for $X\setminus\{\alpha\}$. Since we can determine precisely the set of assumptions of any proof, we do not need to consider sequents in proofs. Each node is labelled only by formulas. 

\begin{longtable}{R|C|C}
    \pi & \text{Preconditions} & \assof(\pi) \\
    \hline
    \hline
    \rnax{\alpha} & \text{None} & \{\alpha\} \\
    \hline 
    \rnabs{\alpha}(\mu) & \assof(\mu) = X & X-\alpha \\ 
    \hline
    \rnemp{\alpha}(\mu), \rnfst(\mu), \rnsnd(\mu), & \multirow{2}{*}{$\assof(\mu) = X$} & \multirow{2}{*}{$X$} \\
    \rnleft{\beta}(\mu), \rnright{\alpha}(\mu) && \\ 
    \hline
    \rnapp(\mu,\nu), \rnpair(\mu,\nu) & \assof(\mu) = X, \assof(\nu) = Y & X \cup Y \\
    \hline
    \multirow{2}{*}{$\rncase(\chi,\mu,\nu)$} & \assof(\chi) = X, \concof(\chi) = \alpha\disj\beta, & \multirow{2}{*}{$X \cup (Y-\alpha) \cup (Z-\beta)$} \\ 
    & \assof(\mu) = Y,\ \assof(\nu) = Z & \\
\end{longtable}

\begin{example}\label{ex:proofTerms}
    In these examples, we write $\alpha_{1}, \ldots, \alpha_{k} \vdash \beta$ to mean that $\{\alpha_{1}, \ldots, \alpha_{k}\} \vdash \beta$.
    \begin{enumerate}[nosep]
        \item $\rnabs{\alpha}(\rnax{\alpha})$ is a proof of $\vdash \alpha\impl\alpha$. 
        \item $\rnabs{\alpha}(\rnabs{\beta}(\rnax{\alpha}))$ is a proof of $\vdash \alpha \impl \beta \impl \alpha$. 
        \item $\rnabs{\alpha}(\rnabs{\beta}(\rnabs{\alpha}(\rnax{\beta})))$ is a proof of $\vdash \alpha \impl \beta \impl \alpha \impl \beta$.
        \item $\rnpair(\rnax{\alpha}, \rnax{\beta})$ is a proof of $\alpha, \beta \vdash \alpha\conj\beta$.
        \item The following is a proof of $\gamma \vdash \alpha\impl(\beta\disj\alpha)$:
        {
            \setlength{\abovedisplayskip}{1pt}
            \setlength{\belowdisplayskip}{1pt}
            \[ 
                \rnabs{\alpha}(\rncase(\rnleft{\beta}(\rnpair(\rnax{\alpha},\rnax{\gamma}))),\ \rnright{\beta}(\rnfst(\rnax{\alpha\conj\gamma})),\ \rnleft{\alpha}(\rnax{\beta})).
            \]
        }
    \end{enumerate}
\end{example}

We use $\pi, \mu, \nu, \chi, \rho, \upsilon$ and variants to denote arbitrary proofs. If $\pi$ is a proof with $\assof(\pi) = X$ and $\concof(\pi) = \alpha$, we say that $\pi$ is a proof of $X \vdash \alpha$. We say that $\alpha$ is \defemph{provable (or derivable) from assumptions} $X$, and denote it by $X \intDerives \alpha$ if there is a proof of $Y \vdash \alpha$ for some $Y \subseteq X$. We say that $\alpha$ is \defemph{provable} (in symbols: $\intDerives \alpha$) if $\emptyset \intDerives \alpha$. 

For a proof $\pi = \rnrule(\pi_{1}, \ldots, \pi_{k})$, we define the \defemph{last rule} of $\pi$, denoted $\lrof(\pi)$, to be $\rnrule$. For a proof $\pi$, we denote by $\sizeof(\pi)$, $\spof(\pi)$, and $\formsof(\pi)$, respectively, the \defemph{size of} $\pi$, \defemph{subproofs of} $\pi$, and \defemph{the set of formulas occurring in} $\pi$. They are defined in the table below. 
\begin{longtable}{R|C|C|C}
    \pi & \sizeof(\pi) & \spof(\pi) & \formsof(\pi) \\
    \hline
    \rnax{\alpha} & 1 & \{\pi\} & \{\alpha\} \\
    \rnrule(\pi_{1}, \ldots, \pi_{k}) & 1 + \sizeof(\pi_{1}) + \cdots + \sizeof(\pi_{k}) & \{\pi\} \cup \spof(\pi_{1}) \cup \cdots \cup \spof(\pi_{k}) & \{\concof(\mu) \mid \mu \in \spof(\pi)\} \\ 
\end{longtable}
A \defemph{proper subproof of $\pi$} is any $\mu \in \spof(\pi)\setminus\{\pi\}$. For $\pi = \rnrule(\pi_{1}, \cdots, \pi_{k})$, its \defemph{immediate subproofs} are $\{\pi_{1}, \ldots, \pi_{k}\}$. If $\rnrule \in \{\rnemp{}, \rnapp, \rnfst, \rnsnd, \rncase\}$, then $\pi_{1}$ is its \defemph{main subproof}. An immediate subproof of $\pi$ that is not main is an \defemph{ancilliary subproof}.

\subsection{Proof Trees}~\label{sec:proofTrees}
We have presented proofs as \defemph{proof terms}, which helps make the presentation quite precise and easy to transcribe in a functional language like Lean. While ideal for a formal study of proofs and for presenting various constructions of proofs, proof terms are not very easy to parse, as can be seen from the ones presented in Example~\ref{ex:proofTerms}. The inherent tree structure of proofs is obscured by this presentation. Hence we use \defemph{proof trees} in examples, and while discussing the structure of proofs and constructions on them.  

\begin{example}~\label{ex:proofTrees}
    Presented below are proof trees corresponding to the proof terms in Example~\ref{ex:proofTerms}. 
    \begin{enumerate}
        \item A proof of $\vdash \alpha\impl\alpha$.
        \[
            \begin{prooftree}
                \closedAss{x}{\alpha}
                \taggedAbs{x}{\alpha\impl\alpha}
            \end{prooftree}
        \]
        \item A proof of $\vdash \alpha\impl\beta\impl\alpha$.
        \[
            \begin{prooftree}
                \closedAss{x}{\alpha}
                \taggedAbs{}{\beta\impl\alpha}
                \taggedAbs{x}{\alpha\impl\beta\impl\alpha}
            \end{prooftree}
        \]
        \item A proof of $\vdash \alpha\impl\beta\impl\alpha\impl\beta$.
        \[
            \begin{prooftree}
                \closedAss{x}{\beta}
                \taggedAbs{}{\alpha\impl\beta}
                \taggedAbs{x}{\beta\impl\alpha\impl\beta}
                \taggedAbs{}{\alpha\impl\beta\impl\alpha\impl\beta}
            \end{prooftree}
        \]
        \item A proof of $\alpha,\beta \vdash \alpha\conj\beta$.
        \[
            \begin{prooftree}
                \hypo{\alpha}
                \hypo{\beta}
                \infer2[\rnpair]{\alpha\conj\beta}
            \end{prooftree}
        \]
        \item A proof of $\gamma \vdash \alpha\impl(\beta\disj\alpha)$. 
        \[
            \begin{prooftree}
                    \closedAss{x}{\alpha}
                    \hypo{\gamma}
                    \infer2[\rnpair]{\alpha\conj\gamma}
                    \infer1[\rnleft{}]{(\alpha\conj\gamma)\disj\beta}
                        
                    \closedAss{y}{\alpha\conj\gamma}
                    \infer1[\rnfst]{\alpha}
                    \infer1[\rnright{}]{\beta\disj\alpha}
                    
                    \closedAss{z}{\beta}
                    \infer1[\rnleft{}]{\beta\disj\alpha}
                \taggedCase{y,z}{\beta\disj\alpha}
                \taggedAbs{x}{\alpha\impl(\beta\disj\alpha)}
            \end{prooftree}
        \]
    \end{enumerate}
\end{example}

We do not formalize this translation from proof terms to proof trees, trusting that the above examples present a reasonable picture. We only note that in the $\rnabs{}$ and $\rncase$ rules, where specific assumptions are \defemph{closed} or \defemph{discharged}, we indicate it by adding a horizontal line on top of the assumptions, along with a \defemph{tag} (a lowercase letter, typically chosen from $a,b,c,x,y,z,w$). We place the same tag to the left of the $\rnabs{}$ or $\rncase$ rule in question. Assumptions without any line on top of them are \defemph{open} (or \defemph{undischarged}). We say that a tree proves $X \vdash \alpha$ if $\alpha$ is the conclusion of the root, and $X$ is the set of undischarged assumptions. 
   
\section{Weak normalization -- key ideas and challenges}\label{sec:challenge}
A very desirable property of proofs is the following -- all formulas occurring in a proof come from a restricted set. More specifically, every $\alpha \in \formsof(\pi)$ is either a subformula of $\concof(\pi)$ or a subformula of some $\beta \in \assof(\pi)$. If every provable judgement has a proof satisfying the \defemph{subformula property} above, then proof search is simplified immensely. To search for a proof of a judgement $X \vdash \alpha$, we only need to consider proofs whose formulas are from $\sfof(X \cup \{\alpha\})$. 

Not all proofs have the subformula property. For instance, the following proof of $\alpha \vdash \alpha\conj\alpha$
\[
    \begin{prooftree}
        \closedAss{x}{\alpha}
        \closedAss{x}{\alpha}
        \infer2[\rnpair]{\alpha\conj\alpha}
        \taggedAbs{x}{\alpha\impl(\alpha\conj\alpha)}
        \hypo{\alpha}
        \infer2[\rnapp]{\alpha\conj\alpha}
    \end{prooftree}
\]
has an occurrence of $\alpha\impl(\alpha\conj\alpha)$ which is neither a subformula of the conclusion nor of any assumption. The same proof in our official syntax is $\rnapp(\rnabs{\alpha}(\rnpair(\rnax{\alpha},\rnax{\alpha})), \rnax{\alpha})$, and it features the pattern $\rnapp(\rnabs{\alpha}(\mu), \nu)$. Such undesirable patterns are called \defemph{cuts} or \defemph{redexes}, and they typically lead to a violation of the subformula property. Proofs without these undesirable patterns are termed \defemph{cut-free} or \defemph{normal}.  

If we can convert every proof to a normal proof with the same conclusion and assumptions, then it would be very useful indeed for automatic proof search, and also for ensuring other desirable structural properties. The fact that any proof can be converted to an equivalent normal proof is referred to as \defemph{Normalization}. Normalization is typically established by defining a set of transformations that can be performed repeatedly on proofs to eliminate cuts. For example, a cut of the form 
\[
    \begin{prooftree}
        \closedAss{x}{\alpha}
        \ellipsis{$\mu$}{\beta}
        \taggedAbs{x}{\alpha\impl\beta}
        \hypo{}
        \ellipsis{$\nu$}{\alpha}
        \infer2[\rnapp]{\beta}
    \end{prooftree}
\]
(with the subproof $\mu$ containing multiple instance of the discharged assumption $\alpha$) can be reduced to the following proof
\[
    \begin{prooftree}
        \hypo{}
        \ellipsis{$\nu$}{\alpha}
        \ellipsis{$\mu$}{\beta}
    \end{prooftree}
\]
obtained by replacing each occurrence of $\rnax{\alpha}$ in $\mu$ with a copy of $\nu$. 

This reduction has the potential to increase the size of the proof, but we identify some other measure that decreases. An approach which works for $\disj$-free fragments is to define the degree of a cut to be the size of the largest formula that features in the cut ($\alpha\impl\beta$ in the above case) and ensure that we apply the above reduction only in cases where all cuts in $\mu$ and $\nu$ have degree strictly less than $\formsize{\alpha\impl\beta}$. We then prove that the proof that results after reduction has maximum degree less than $\formsize{\alpha\impl\beta}$. But this breaks down in the presence of disjunction and the $\rncase$ rule. Consider the following proof $\pi$, for example. 
\[
    \begin{prooftree}
        \hypo{}
        \ellipsis{$\chi$}{\alpha\disj\beta}
        \closedAss{x}{\alpha}
        \ellipsis{$\mu$}{\gamma\impl\delta}
        \closedAss{y}{\beta}
        \ellipsis{$\nu$}{\gamma\impl\delta}
        \taggedCase{x,y}{\gamma\impl\delta}
        \hypo{}
        \ellipsis{$\rho$}{\gamma}
        \infer2[\rnapp]{\delta}
    \end{prooftree}
\]
It is possible that either $\mu$ or $\nu$ itself, or one of their subproofs, introduces $\gamma\impl\delta$ via the $\rnabs{\gamma}$ rule. This is eliminated at the root of $\pi$, potentially a long distance away from the rule that introduced the implication. The only route to normalizing this proof is to bring the $\rnapp$ closer to the source of the introduction, via the following reduction. 
\[
    \begin{prooftree}
        \hypo{}
        \ellipsis{$\chi$}{\alpha\disj\beta}
        \closedAss{x}{\alpha}
        \ellipsis{$\mu$}{\gamma\impl\delta}
        \hypo{}
        \ellipsis{$\rho$}{\gamma}
        \infer2[\rnapp]{\delta}
        \closedAss{y}{\beta}
        \ellipsis{$\nu$}{\gamma\impl\delta}
        \hypo{}
        \ellipsis{$\rho$}{\gamma}
        \infer2[\rnapp]{\delta}
        \taggedCase{x,y}{\delta}
    \end{prooftree}
\]
The reduced proof is likely to have larger size than $\pi$, because $\rho$ is repeated twice. But something does reduce -- in $\pi$ there are two sequences of proofs -- $\mu_{1}, \ldots, \mu_{k}, \mu, \pi$ and $\nu_{1}, \ldots, \nu_{\ell}, \nu, \pi$, with each sequence maximal s.t.\ all $\mu_{i}$'s and $\nu_{j}$'s have conclusion $\gamma\impl\delta$, each $\mu_{i}$ is an immediate subproof of $\mu_{i+1}$ and $\mu_{k}$ an immediate subproof of $\mu$, with similar conditions being satisfied by the $\nu_{j}$'s. Such sequences of proofs are \defemph{cut segments} of degree $\formsize{\gamma\impl\delta}$. The sum of the length of cut segments of this degree in the reduced proof can be shown to be smaller than the corresponding sum in $\pi$, under appropriate conditions. 

\defemph{This brings us to our first technical contribution} -- rather than consider cut segments, we consider $\pi$ of the above form as a cut of degree $\formsize{\gamma\impl\delta}$ and \defemph{weight} $\sizeof(\chi) + \sizeof(\mu) + \sizeof(\nu) + 1$. Note that the weight is the size of the main subproof of $\pi$. For cuts whose main subproof ends in $\rncase$, we take the weight to be the size of their main subproof, while for other cuts, we take the weight to be $1$. We can show that in the reduced proof, the weight decreases under appropriate conditions.  

A further subtlety is revealed by the following proof $\pi$.
\[
    \begin{prooftree}
        \hypo{}
        \ellipsis{$\chi$}{\alpha\disj\beta}
        \hypo{}
        \ellipsis{$\mu$}{\gamma\disj\delta}
        \hypo{}
        \ellipsis{$\nu$}{\gamma\disj\delta}
        \infer3[\rncase]{\gamma\disj\delta}
        \hypo{}
        \ellipsis{$\rho$}{\phi\impl\psi}
        \hypo{}
        \ellipsis{$\upsilon$}{\phi\impl\psi}
        \infer3[\rncase]{\phi\impl\psi}
        \hypo{}
        \ellipsis{$\varpi$}{\phi}
        \infer2[\rnapp]{\psi}
    \end{prooftree}
\]
Suppose $\pi_{1}$ denotes the main subproof of $\pi$, and suppose that $\formsize{\gamma\disj\delta} = \formsize{\phi\impl\psi}$. Then reducing $\pi_{1}$ would lead to an increase in weight of the main suproof of the cut $\pi$. So we must actually reduce $\pi$, even though its subproof $\pi_{1}$ has same degree as $\pi$. So we have moved away from the strategy of reducing the rightmost innermost cut of maximum degree, which works for the fragment without disjunction. \defemph{This brings us to our second technical contribution} -- we identify precise conditions under which we contract at the root of a cut rather than reduce a subproof, even if some subproofs have the same degree as the cut itself. Once we identify these rules, we can phrase the entire process as a deterministic algorithm for one-step reduction. 

We hope that the preceding discussion is helpful in motivating the various definitions and particular choices made in the technical development of the normalization proof presented next.

\section{Normalization}\label{sec:njNorm}
We assign a \defemph{cut-rank} for each proof, which is a pair of natural numbers. Cut-ranks are compared by lexicographic ordering on pairs, which is a well-ordering. We define an associative and commutative operation $\rankadd$ on pairs of natural numbers, as follows.
{
    \setlength{\abovedisplayskip}{1pt}
    \setlength{\belowdisplayskip}{1pt}
    \[
        (d,n) \rankadd (d',n') \coloneq 
        \begin{cases}
            (d',n') & \text{if }d < d' \\ 
            (d, n+n') & \text{if }d = d' \\
            (d,n) & \text{if }d > d'
        \end{cases}
    \]
}        

\begin{definition}[Cuts and cut-rank]
    The following table lists the patterns that define a cut $\mu$, along with its \defemph{cut-formula} (denoted $\cutform(\mu)$), \defemph{degree} (denoted $\degree(\mu)$), and \defemph{weight} (denoted $\cutwt(\mu)$).
    \begin{longtable}{C|C|C|C|C}
        \mu & \text{Preonditions} & \text{Cut-formula} & \text{Degree} & \text{Weight} \\
        \hline 
        \hline
        \rnemp{\bot}(\nu) & \text{None} & \bot & 1 & 1 \\ 
        \hline
        \multirow{2}{*}{$\rnrule(\nu, \ldots)$} & \rnrule \in \{\rnapp, \rnfst, \rnsnd, \rncase\} & \multirow{2}{*}{$\concof(\nu)$} & \multirow{2}{*}{$\formsize{\concof(\nu)}$} & \multirow{2}{*}{$1$} \\
        & \lrof(\nu) \in \{\rnemp{}, \rnabs{}, \rnpair, \rnleft{}, \rnright{}\} & & & \\
        \hline        
        \multirow{2}{*}{$\rnrule(\nu, \ldots)$} & \rnrule \in \{\rnapp, \rnfst, \rnsnd, \rncase\} & \multirow{2}{*}{$\concof(\nu)$} & \multirow{2}{*}{$\formsize{\concof(\nu)}$} & \multirow{2}{*}{$\sizeof(\nu)$} \\
        & \lrof(\nu)  = \rncase & & & \\
    \end{longtable}
    If $\mu$ is not a cut, we still define $\degree(\mu)$ and $\cutwt(\mu)$ to be $0$. The \defemph{cut-rank} of a proof $\pi$, denoted $\cutrank(\pi)$, is defined to be $(0,0)$ when $\pi = \rnax{}$, and $(\degree(\pi),\cutwt(\pi)) \rankadd \cutrank(\pi_{1}) \rankadd \cdots \rankadd \cutrank(\pi_{k})$ when $\pi = \rnrule(\pi_{1}, \ldots, \pi_{k})$. A proof $\pi$ is \defemph{normal} if $\cutrank(\pi) = (0,0)$. Note that $\pi$ is normal iff none of its subproofs is a cut.
\end{definition}
We next present some key transformations on proofs, and prove that they decrease the cut-rank (under appropriate conditions). If $\pi$ and $\rho$ are proofs with $\assof(\pi) = X, \assof(\rho) = Y$ and $\concof(\rho) = \alpha$, we can \defemph{graft} the proof $\rho$ on top of $\pi$. Formally, we define $\graft{\pi}{\alpha}{\rho}$ to be the proof obtained by replacing each subproof $\rnax{\alpha}$ in $\pi$ with $\rho$. It is easy to see that if $\alpha\notin{X}$, then $\graft{\pi}{\alpha}{\rho} = \pi$, and also that $\assof(\graft{\pi}{\alpha}{\rho}) \subseteq Y \cup (X-\alpha)$ and $\concof(\graft{\pi}{\alpha}{\rho}) = \concof(\pi)$.

\begin{lemma}[Cut-rank of a graft]\label{lem:graft-cr}
    Suppose $\pi$ and $\rho$ are proofs with $\concof(\rho) = \alpha$, $\cutrank(\pi), \cutrank(\rho) < (d,0)$, and $\formsize{\alpha} < d$. Then $\cutrank(\graft{\pi}{\alpha}{\rho}) < (d,0)$.
\end{lemma}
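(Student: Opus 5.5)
Observe first that $\cutrank(\sigma) < (d,0)$ holds exactly when every subproof $\sigma'$ of $\sigma$ has $\degree(\sigma') < d$. This follows by an easy induction on $\sigma$ from the definition of $\rankadd$: the first component of $\cutrank(\sigma)$ is the maximum of $\degree$ over $\spof(\sigma)$, and a pair is below $(d,0)$ in the lexicographic order precisely when its first component is below $d$. With this characterization the statement becomes a statement about degrees of individual subproofs. We need to show that every subproof of $\graft{\pi}{\alpha}{\rho}$ has degree less than $d$.

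I will argue by structural induction on $\pi$, with $\rho$, $\alpha$ and $d$ fixed. Each subproof of $\graft{\pi}{\alpha}{\rho}$ falls into one of two kinds. It may be a subproof of one of the grafted copies of $\rho$, and then its degree is below $d$ by hypothesis. Otherwise it has the form $\graft{\sigma}{\alpha}{\rho}$ for some subproof $\sigma$ of $\pi$ with $\sigma \ne \rnax{\alpha}$. Phrased inductively: if $\pi = \rnax{\alpha}$, the graft is $\rho$ and we are done. If $\pi = \rnax{\beta}$ with $\beta\neq\alpha$, the graft is $\pi$ itself. If $\pi = \rnrule(\pi_1,\ldots,\pi_k)$, the graft is $\rnrule(\graft{\pi_1}{\alpha}{\rho},\ldots,\graft{\pi_k}{\alpha}{\rho})$. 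Conclusions are preserved, so this is a well-formed proof. By the induction hypothesis each immediate subproof has cut-rank below $(d,0)$, so it only remains to bound the degree of the root.

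The root case is the crux. Write $\sigma = \rnrule(\sigma_1,\ldots)$ and let $\sigma_1' = \graft{\sigma_1}{\alpha}{\rho}$. Whether the graft is a cut depends only on $\rnrule$ and on $\lrof(\sigma_1')$.
\begin{itemize}
\item If $\sigma_1 \neq \rnax{\alpha}$, then $\lrof(\sigma_1') = \lrof(\sigma_1)$. So the graft is a cut iff $\sigma$ is, with the same cut-formula $\concof(\sigma_1)$ and hence the same degree, which is below $d$.
\item If $\sigma_1 = \rnax{\alpha}$, then $\sigma$ itself was not a cut, since its main subproof is an assumption. The graft may now be a new cut, namely when $\lrof(\rho)$ is an introduction, $\rnemp{}$ or $\rncase$. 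Its cut-formula is then $\concof(\rho) = \alpha$, so its degree is $\formsize{\alpha} < d$.
\item The special pattern $\rnemp{\bot}(\nu)$ has degree $1$ regardless of grafting. If it occurs in the graft it also occurred in $\pi$, or it lies inside $\rho$, so its degree is below $d$ in either case.
\end{itemize}
Weights may change arbitrarily under grafting, but they are irrelevant here because the bound $(d,0)$ only constrains first components.

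I expect the main obstacle to be the bookkeeping in the second case: checking that a newly created cut's cut-formula is exactly $\alpha$, and that the weight (possibly $\sizeof(\rho)$ when $\lrof(\rho) = \rncase$) does not matter. The opening characterization of $\cutrank(\cdot)<(d,0)$ takes care of the latter. Once that is in place, the rest is routine case analysis over the rule table.
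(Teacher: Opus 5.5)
Your proof is correct and follows the paper's argument. The paper also does structural induction on $\pi$ and splits the root case according to whether the main subproof is the grafted assumption $\rnax{\alpha}$; it phrases this as whether $\lrof$ of the main subproof changes. Either way, any newly created cut has cut-formula $\alpha$ and degree $\formsize{\alpha} < d$. Your characterization of $\cutrank(\sigma) < (d,0)$ as ``every subproof has degree $< d$'' and your separate handling of $\rnemp{\bot}$ only make explicit what the paper uses implicitly.
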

\begin{proof}
    We prove this by induction on the structure of $\pi$. We denote $\graft{\pi}{\alpha}{\rho}$ as $\overline{\pi}$ for notational ease. 
    \begin{itemize}[nosep]
        \item If $\pi = \rnax{\alpha}$, then $\overline{\pi} = \rho$, and it has cut-rank $< (d,0)$ by assumption. 
        \item If $\pi = \rnrule(\pi_{1}, \ldots, \pi_{k})$, then $\overline{\pi} = \rnrule(\overline{\pi_{1}}, \ldots, \overline{\pi_{k}})$. By IH, $\cutrank(\overline{\pi_{i}}) < (d,0)$ for each $i \le k$. Now we consider two cases.
        
        If $\lrof(\pi_{1}) = \lrof(\overline{\pi_{1}})$, then $\pi$ is a cut iff $\overline{\pi}$ is a cut, and they both have the same cut-formula, and hence the same degree. By assumption, $\degree(\pi) < d$, so we have also have $\degree(\overline{\pi}) < d$.  
        
        If $\lrof(\pi_{1}) \neq \lrof(\overline{\pi_{1}})$, then it has to be that $\pi_{1} = \rnax{\alpha}$ and $\overline{\pi_{1}} = \rho$. If $\overline{\pi}$ is a cut, the cut-formula is $\alpha$ and its degree is $< d$, by assumption.
    \end{itemize} 
    So we see that $\cutrank(\graft{\pi}{\alpha}{\rho}) < (d,0)$. 
\end{proof}

\begin{definition}[Contractum]\label{def:ctm}
    For any cut $\pi$, its \defemph{contractum} $\contractum(\pi)$ is defined as in the table below. 
    \begin{longtable}{l|CC|C}
        & \pi & & \contractum(\pi) \\ 
        \hline
        \rownum. & \rnapp(\rnabs{\alpha}(\mu), \nu) & & \graft{\mu}{\alpha}{\nu} \\ 
        \rownum. & \rnfst(\rnpair(\mu, \nu)) & & \mu \\ 
        \rownum. & \rnsnd(\rnpair(\mu, \nu)) & & \nu \\ 
        \rownum. & \rncase(\rnleft{\beta}(\chi), \mu, \nu) & (\concof(\chi) = \alpha) & \graft{\mu}{\alpha}{\chi} \\ 
        \rownum. & \rncase(\rnright{\alpha}(\chi), \mu, \nu) & (\concof(\chi) = \beta) & \graft{\nu}{\beta}{\chi} \\ 
        \rownum. & \rnemp{\bot}(\mu) & & \mu \\ 
        \rownum. & \rnapp(\rnemp{\alpha\impl\beta}(\mu), \nu) & & \rnemp{\beta}(\mu) \\ 
        \rownum. & \rnfst(\rnemp{\alpha\conj\beta}(\mu)) & & \rnemp{\alpha}(\mu) \\ 
        \rownum. & \rnsnd(\rnemp{\alpha\conj\beta}(\mu)) & & \rnemp{\beta}(\mu) \\ 
        \rownum. & \rncase(\rnemp{\alpha\disj\beta}(\chi), \mu, \nu) & (\concof(\pi) = \delta) & \rnemp{\delta}(\chi) \\ 
        \rownum. & \rnapp(\rncase(\chi, \mu, \nu), \rho) & & \rncase(\chi, \rnapp(\mu, \rho), \rnapp(\nu, \rho)) \\ 
        \rownum. & \rnfst(\rncase(\chi, \mu, \nu)) & & \rncase(\chi, \rnfst(\mu), \rnfst(\nu)) \\ 
        \rownum. & \rnsnd(\rncase(\chi, \mu, \nu)) & & \rncase(\chi, \rnsnd(\mu), \rnsnd(\nu)) \\ 
        \rownum. & \rncase(\rncase(\chi, \mu, \nu), \rho, \upsilon) & & \rncase(\chi, \rncase(\mu, \rho, \upsilon), \rncase(\nu, \rho, \upsilon)) \\ 
    \end{longtable}
\end{definition}

\begin{lemma}\label{lem:ass-ctm}
    For any cut $\pi$, $\assof(\contractum(\pi)) \subseteq \assof(\pi)$ and $\concof(\contractum(\pi)) = \concof(\pi)$. Furthermore, either $\formsize{\concof(\contractum(\pi))} < \degree(\pi)$ or $\lrof(\pi) = \lrof(\contractum(\pi))$ or $\lrof(\pi) \in \{\rnemp{\bot}, \rncase\}$.  
\end{lemma}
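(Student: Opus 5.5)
The plan is a direct case analysis over the fourteen rows of Definition~\ref{def:ctm}, using the graft facts stated just before Lemma~\ref{lem:graft-cr}: $\concof(\graft{\pi}{\alpha}{\rho}) = \concof(\pi)$ and $\assof(\graft{\pi}{\alpha}{\rho}) \subseteq \assof(\rho) \cup (\assof(\pi)-\alpha)$. Each of the three claims is settled row by row, reading $\assof$ and $\concof$ off the defining tables.

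For the assumption inclusion, I will treat the rows in four groups. In row~1, $\assof(\contractum(\pi)) \subseteq \assof(\nu) \cup (\assof(\mu)-\alpha) = \assof(\pi)$. In rows~4 and~5 the bound is $\assof(\chi) \cup (\assof(\mu)-\alpha)$, respectively $\assof(\chi)\cup(\assof(\nu)-\beta)$, and both are contained in $\assof(\pi)$; in row~4 this uses that $\concof(\rnleft{\beta}(\chi))=\alpha\disj\beta$, so $\alpha$ is exactly the formula discharged in $\mu$. Rows~2, 3, 6--10 are immediate, since the contractum's assumptions are those of one subproof of $\pi$. The permutative rows~11--14 have equal assumption sets, by distributing set difference. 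For instance, in row~11 with $\concof(\chi)=\alpha\disj\beta$,
\[
\assof(\contractum(\pi)) = \assof(\chi) \cup ((\assof(\mu)\cup\assof(\rho))-\alpha) \cup ((\assof(\nu)\cup\assof(\rho))-\beta) \subseteq \assof(\pi).
\]
Row~14 is handled the same way, since $\mu,\nu$ and $\rho,\upsilon$ discharge different formulas.

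For the conclusions, the grafting rows use $\concof(\graft{\mu}{\alpha}{\nu})=\concof(\mu)$. The $\rnemp{}$ rows carry explicit subscripts matching $\concof(\pi)$. In the permutative rows, $\concof(\mu)=\concof(\nu)$ makes the inner applications type-check with the same conclusion.

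For the last disjunction, I will again go by rows. Rows~1--5 give the first disjunct, because the conclusion of the contractum is a proper subformula of the cut-formula. For example, in row~1, $\formsize{\beta} < \formsize{\alpha\impl\beta} = \degree(\pi)$; in row~4, $\formsize{\delta}$ need not be small. This exposes the one point of care: rows~4, 5 and~10 (and~14) have $\lrof(\pi)=\rncase$, so they fall under the third disjunct regardless. Row~6 has $\lrof(\pi)=\rnemp{\bot}$. Rows~2, 3 and~7--9 also need checking: in row~2, $\concof=\alpha$ and $\formsize{\alpha}<\formsize{\alpha\conj\beta}$. In rows~7--9 the contractum's conclusion is a proper component of the cut-formula, so the first disjunct holds. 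For rows~11--13, $\lrof(\pi)\in\{\rnapp,\rnfst,\rnsnd\}$ while $\lrof(\contractum(\pi))=\rncase$, so the second disjunct fails. However, the cut-formula is $\concof(\rncase(\chi,\mu,\nu))$, a compound $\gamma\impl\delta$ or $\gamma\conj\delta$ whose relevant component is $\concof(\pi)$, so the first disjunct holds.

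The main obstacle is purely bookkeeping: keeping track of which discharged formula corresponds to which branch in the $\rncase$ rows, so that the set-difference identities go through. No step should be conceptually hard.
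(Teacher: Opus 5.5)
Your proposal is correct and is the same routine row-by-row case analysis the paper relies on (its proof consists of that one sentence); all three claims check out in every row, including the need for the third disjunct in rows 4--6, 10 and 14. One harmless slip: in rows 11--14 the assumption sets are in general only included, not equal, because the moved subproofs $\rho$ and $\upsilon$ fall under the discharges of the permuted $\rncase$ (for instance, with $\concof(\chi)=\alpha\disj\alpha$ and $\alpha\in\assof(\rho)\setminus\assof(\chi)$ the inclusion is strict), and your remark that the inner and outer branches discharge ``different formulas'' is not needed, since $(A\cup B)-\gamma\subseteq(A-\gamma)\cup B$ already yields the inclusion your display states.
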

\begin{proof}
    A routine case analysis of the structure of $\pi$.
\end{proof}

\begin{lemma}[Cut-rank of a contractum]\label{lem:cr-red-ctm}
    Suppose $\pi$ is a cut of degree $d$, and all its ancilliary subproofs have cut-rank $< (d,0)$. Suppose further that $\lrof(\upsilon) = \rncase$ or $\cutrank(\upsilon) < (d,0)$, where $\upsilon$ is the main subproof of $\pi$. Then $\cutrank(\contractum(\pi)) < \cutrank(\pi)$. 
\end{lemma}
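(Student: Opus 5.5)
The plan is to compare $\cutrank(\contractum(\pi))$ directly with $\cutrank(\pi)$ using two facts about $\rankadd$. First, $\rankadd$ is monotone: if $a \le a'$ then $a \rankadd b \le a' \rankadd b$. Second, $\cutrank(\pi) \ge (d,\cutwt(\pi))$ with $\cutwt(\pi) \ge 1$, so $\cutrank(\pi) \ge (d,1) > (d,0)$. We also use that anything of cut-rank $<(d,0)$ is absorbed: if $c<(d,0)$ and $a \ge (d,0)$, then $a \rankadd c = a$. I split the fourteen rows of Definition~\ref{def:ctm} into two groups: the ``detour'' rows $1$--$10$, and the permutative rows $11$--$14$, whose main subproof ends in $\rncase$.

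\emph{Rows 1--10.} Here the main subproof $\upsilon$ does not end in $\rncase$, so by hypothesis $\cutrank(\upsilon) < (d,0)$. Hence every subproof of $\upsilon$, and every ancilliary subproof, has cut-rank $<(d,0)$. The goal in these rows is the stronger bound $\cutrank(\contractum(\pi)) < (d,0)$.
\begin{itemize}
\item Rows 2, 3 and 6: the contractum is a subproof of $\upsilon$, so the bound is immediate.
\item Row 1: the contractum is $\graft{\mu}{\alpha}{\nu}$ with $\formsize{\alpha} < \formsize{\alpha\impl\beta} = d$, so Lemma~\ref{lem:graft-cr} applies.
\item Rows 4 and 5: the contractum is $\graft{\mu}{\alpha}{\chi}$ (resp.\ with $\nu,\beta$), where $\formsize{\alpha} < \formsize{\alpha\disj\beta} = d$ and $\mu$ is ancilliary. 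Again Lemma~\ref{lem:graft-cr} applies.
\item Rows 7--10: the contractum is $\rnemp{\gamma}(\mu)$ with $\cutrank(\mu)<(d,0)$. Its root is a cut only when $\gamma = \bot$, and then its degree is $1 < d$.
\end{itemize}
In row 6 itself we have $d=1$, and the main subproof is $\mu$ with conclusion $\bot$. If $\mu$ happens to end in $\rncase$, the hypothesis gives no bound on $\cutrank(\mu)$. I would check this boundary case separately, probably by noting that $\cutrank(\pi) = (1,1)\rankadd\cutrank(\mu)$ and treating it like the permutative case below.

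\emph{Rows 11--14.} Here $\upsilon = \rncase(\chi,\mu,\nu)$ and $\cutwt(\pi) = \sizeof(\upsilon) = 1+\sizeof(\chi)+\sizeof(\mu)+\sizeof(\nu)$. I would write $c_0 = (\degree(\upsilon),\cutwt(\upsilon))$ and expand:
\[
\cutrank(\pi) = (d,\sizeof(\upsilon)) \rankadd c_0 \rankadd \cutrank(\chi)\rankadd\cutrank(\mu)\rankadd\cutrank(\nu)\rankadd\cutrank(\rho),
\]
where the $\cutrank(\rho)$ term is omitted for $\rnfst,\rnsnd$, and for row 14 is replaced by $\cutrank(\rho)\rankadd\cutrank(\upsilon')$ with $\upsilon'$ the second ancilliary subproof.

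In the contractum the new root is $\rncase(\chi,\ldots)$. Whether it is a cut, and its degree and weight, depend only on $\chi$, so it contributes exactly $c_0$. The two new inner proofs, e.g.\ $\rnapp(\mu,\rho)$ and $\rnapp(\nu,\rho)$, have cut-formulas $\concof(\mu)=\concof(\nu)=\concof(\upsilon)$, so each is either not a cut or a cut of degree $d$. Their weights are $1$ or $\sizeof(\mu)$, resp.\ $1$ or $\sizeof(\nu)$, so together they contribute at most $(d,w)$ with $w \le \sizeof(\mu)+\sizeof(\nu) < \sizeof(\upsilon)$. The ancilliary subproofs are now duplicated, but they have cut-rank $<(d,0)$, so both copies are absorbed. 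Hence
\[
\cutrank(\contractum(\pi)) \le (d,w)\rankadd c_0\rankadd\cutrank(\chi)\rankadd\cutrank(\mu)\rankadd\cutrank(\nu),
\]
and since $(d,w)<(d,\sizeof(\upsilon))$, strict inequality with $\cutrank(\pi)$ should follow.

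The main obstacle is that last step: strictness is lost if $c_0\rankadd\cutrank(\chi)\rankadd\cutrank(\mu)\rankadd\cutrank(\nu)$ has first component $>d$, since then both sides equal that common part. To handle this I would first prove a small lemma about when $\rankadd$ preserves strict inequality, namely that $a<a'$ with first component of $a'$ equal to $d$, together with $b$ of first component $\le d$, gives $a\rankadd b < a'\rankadd b$. I would then either extract from the hypotheses that the relevant parts of $\upsilon$ have degree $\le d$, or state the comparison relative to that common part.
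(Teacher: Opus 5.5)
Your case split and bookkeeping follow the paper's: the graft lemma for rows 1, 4 and 5, direct observations for rows 2, 3 and 7--10, and weight accounting for rows 11--14. In rows 11--14 your remark that the new root contributes exactly $(\degree(\upsilon),\cutwt(\upsilon))$ is cleaner than the paper's treatment. But the step you leave open is a genuine gap, and it cannot be closed from the stated hypotheses, because the lemma as written is false.

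Let $\chi=\rnfst(\rnpair(\rnax{p\disj q},\rnax{r}))$, a cut of degree $\formsize{(p\disj q)\conj r}=5$, and let $\pi=\rnapp(\rncase(\chi,\rnax{r\impl r},\rnax{r\impl r}),\rnax{r})$. Then:
\begin{itemize}
\item $\pi$ is a cut of degree $d=3$;
\item its ancilliary subproof is normal, and its main subproof ends in $\rncase$;
\item yet $\cutrank(\pi)=(3,7)\rankadd(5,1)=(5,1)$;
\item and $\contractum(\pi)=\rncase(\chi,\rnapp(\rnax{r\impl r},\rnax{r}),\rnapp(\rnax{r\impl r},\rnax{r}))$ also has cut-rank $(5,1)$.
\end{itemize}
Your row-6 worry is equally real. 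If $\mu$ ends in $\rncase$ and contains a cut of degree $>1$, then $\cutrank(\rnemp{\bot}(\mu))=\cutrank(\mu)$, which is the cut-rank of the contractum.

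So your first option (deriving from the hypotheses that $\upsilon$ has no cuts of degree $>d$) is unavailable. You must take the second and state it as an explicit extra hypothesis: every cut in $\pi$ has degree $\le d$, i.e.\ $\cutrank(\pi)=(d,n)$ for some $n$. The paper's proof assumes exactly this, tacitly, in its first sentence (``Suppose $\cutrank(\pi)=(d,n)$''). That assumption is also what justifies its later claim that the main subproof has degree $\le d$. The extra hypothesis costs nothing where it matters: the lemma is only used in Lemma~\ref{lem:cr-red-red}, under rule 1 of Definition~\ref{def:one-step-red}, where $\cutrank(\pi)=(d,n)$ and $\degree(\pi)=d$.

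With this hypothesis, your strictness lemma completes rows 11--14. It also completes row 6: $\cutrank(\mu)$ then has first component $\le 1$, so $\cutrank(\mu)<(1,1)\rankadd\cutrank(\mu)=\cutrank(\pi)$. The paper's own proof skips this sub-case by asserting $\cutrank(\mu)<(d,0)$ ``by assumption''.

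One correction to your strictness lemma: it needs the second component of $a'$ to be positive. With $a=(d-1,0)$, $a'=(d,0)$ and $b=(d,1)$, both sides equal $(d,1)$. The condition is satisfied in your application, where $a'=(d,\sizeof(\upsilon))$.
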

\begin{proof}
    Suppose $\cutrank(\pi) = (d,n)$. The proof is by analysing the last rule of $\pi$ (and that of its main subproof). We consider the various cases below. For ease of notation, we use $\overline{\pi}$ to refer to $\contractum(\pi)$. 
    \begin{itemize}
        \item Suppose $\pi = \rnapp(\rnabs{\alpha}(\mu),\nu)$ with $\concof(\mu) = \beta$, $\concof(\nu) = \alpha$, and $d = \formsize{\alpha\impl\beta}$. Then we have $\overline{\pi} = \graft{\mu}{\alpha}{\nu}$. We also have $\cutrank(\mu), \cutrank(\nu) < (d,0)$ by assumption, and $\formsize{\alpha} < d$. So by Lemma~\ref{lem:graft-cr}, $\cutrank(\overline{\pi}) < (d,0)$. 
        \item Suppose $\pi = \rnfst(\rnpair(\mu,\nu))$ with $\concof(\mu) = \alpha$, $\concof(\nu) = \beta$ and $d = \formsize{\alpha\conj\beta}$. Then we have $\overline{\pi} = \mu$, which has cut-rank $< (d,0)$, by assumption.  
        \item Suppose $\pi = \rnsnd(\rnpair(\mu,\nu))$. This is similar to the above case.
        \item Suppose $\pi = \rncase(\rnleft{\beta}(\chi),\mu,\nu)$ with $\concof(\chi) = \alpha$, $\concof(\mu) = \concof(\nu) = \delta$ and $d = \formsize{\alpha\disj\beta}$. Then we have $\overline{\pi} = \graft{\mu}{\alpha}{\chi}$. By assumption, we have $\cutrank(\chi), \cutrank(\mu), \cutrank(\nu) < (d,0)$. It follows  by Lemma~\ref{lem:graft-cr} that $\cutrank(\overline{\pi}) < (d,0)$, since $\formsize{\alpha} < d$. 
        \item Suppose $\pi = \rncase(\rnright{\alpha}(\chi),\mu,\nu)$. This is similar to the above case.
        \item Suppose $\pi = \rnemp{\bot}(\mu)$. Then $\overline{\pi} = \mu$ which has cut-rank $< (d,0)$ by assumption.
        \item Suppose the main subproof of $\pi$ has last rule $\rnemp{}$. All cases are similar, so we consider the case when $\pi = \rnapp(\rnemp{\alpha\impl\beta}(\mu), \nu)$, with $\concof(\mu) = \bot$, $\concof(\nu) = \alpha$ and $d = \formsize{\alpha\impl\beta}$. Then we have $\overline{\pi} = \rnemp{\beta}(\mu)$. We have $\cutrank(\mu) < (d,0)$ by assumption. If $\overline{\pi}$ itself is a cut, then it has to be that $\beta = \bot$, in which case $\degree(\overline{\pi}) = \formsize{\bot} < d$, so $\cutrank(\overline{\pi}) < (d,0)$. 
        \item Suppose the main subproof of $\pi$ has last rule $\rncase$. All cases are similar, so we consider the case when $\pi = \rncase(\rho, \pi_{1}, \pi_{2})$, with $\rho = \rncase(\chi, \mu, \nu)$. Then we have $\concof(\pi) = \concof(\pi_{1}) = \concof(\pi_{2}) = \phi$, $\concof(\rho) = \concof(\mu) = \concof(\nu) = \alpha\disj\beta$, and $\concof(\chi) = \gamma\disj\delta$ for some $\alpha, \beta, \gamma, \delta, \phi$. Further, we have $\degree(\pi) = d = \formsize{\alpha\disj\beta}$, and $\cutwt(\pi) = \sizeof(\rho) = 1 + \sizeof(\chi) + \sizeof(\mu) + \sizeof(\nu)$. By definition of $\contractum$, we have $\overline{\pi} = \rncase(\chi, \overline{\mu}, \overline{\nu})$ where $\overline{\mu} = \rncase(\mu, \pi_{1}, \pi_{2})$ and $\overline{\nu} = \rncase(\nu, \pi_{1}, \pi_{2})$. We have $\degree(\rho) \le d$ and $\cutrank(\pi_{1}), \cutrank(\pi_{2}) < (d,0)$ by assumption. In forming $\overline{\pi}$, the cuts inside $\chi$, $\mu$ and $\nu$ are neither duplicated nor modified. The cuts inside $\pi_{1}$ and $\pi_{2}$ are duplicated in $\overline{\pi}$ but they are of degree $< d$. It is possible that three new cuts have been introduced -- $\overline{\mu}$, $\overline{\nu}$, and $\overline{\pi}$ itself. The first two of these possible cuts have degree $\formsize{\alpha\disj\beta} = d$, and the sum of their weights is $\le \sizeof(\mu) + \sizeof(\nu)$. If $\overline{\pi}$ is a cut, then it is easy to see that $\rho$ itself is a cut in $\pi$, and its degree is $\le d$. Furthermore, $\cutwt(\overline{\pi}) = \sizeof(\chi)$. Thus we see that the sum of the weights of all the possible new cuts of degree $d$ is strictly less than the weight of the cut $\pi$. Thus $\cutrank(\overline{\pi}) < \cutrank(\pi)$.  
        \qedhere
    \end{itemize}
\end{proof}

\begin{definition}[One-step reduction]\label{def:one-step-red}
    For proofs $\pi$ and $\overline{\pi}$ with $\cutrank(\pi) = (d,n)$, we say that $\pi \oneStep \overline{\pi}$ iff one of the conditions in the table below holds:
    \setcounter{rownum}{0}
    \begin{longtable}{l|C|C|C}
        & \pi & \text{Preconditions} & \overline{\pi} \\ 
        \hline 
        \hline
        \multirow{2}{*}{\rownum.} & \multirow{2}{*}{$\rnrule(\pi_{1}, \ldots, \pi_{i}, \ldots, \pi_{k})$} & \degree(\pi) = d,\quad \cutrank(\pi_{2}), \ldots, \cutrank(\pi_{k}) < (d,0), & \multirow{2}{*}{$\contractum(\pi)$} \\ 
        & & \lrof(\pi_{1}) = \rncase \text{ or }\cutrank(\pi_{1}) < (d,0) & \\   
        \hline  
        \multirow{2}{*}{\rownum.} & \multirow{2}{*}{$\rnrule(\pi_{1}, \ldots, \pi_{k})$} & \lrof(\pi_{1}) \neq \rncase\text{ or }\degree(\pi) < d, & \multirow{2}{*}{$\rnrule(\overline{\pi_{1}}, \ldots, \ \pi_{k})$} \\  
        & & \cutrank(\pi_{1}) \ge (d,0),\quad \pi_{1} \oneStep \overline{\pi_{1}} & \\ 
        \hline  
        \rownum. & \rnrule(\pi_{1}, \ldots, \pi_{i}, \ldots, \pi_{k}) & i > 1,\quad \cutrank(\pi_{i}) \ge (d,0),\quad \pi_{i} \oneStep \overline{\pi_{i}} & \rnrule(\pi_{1}, \ldots, \overline{\pi_{i}}, \ldots, \pi_{k}) \\ 
    \end{longtable}
\end{definition}

\begin{lemma}\label{lem:red-exists}
    Suppose $\pi$ is a non-normal proof. Then there exists $\overline{\pi}$ such that $\pi \oneStep \overline{\pi}$. 
\end{lemma}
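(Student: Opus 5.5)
Let $\pi = \rnrule(\pi_{1}, \ldots, \pi_{k})$ with $\cutrank(\pi) = (d,n)$. I will argue by induction on the structure of $\pi$. Since $\pi$ is non-normal, $d \ge 1$, and $\pi$ is not $\rnax{}$, since $\cutrank(\rnax{\alpha}) = (0,0)$. By definition of $\rankadd$, $(d,n)$ is the maximum in the first component of $(\degree(\pi),\cutwt(\pi))$ and the $\cutrank(\pi_{i})$. Hence $\degree(\pi) \le d$ and $\cutrank(\pi_{i}) < (d+1,0)$ for every $i$, and either $\degree(\pi) = d$ or some $\pi_{i}$ has $\cutrank(\pi_{i}) \ge (d,0)$. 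A useful observation: if $\cutrank(\pi_{i}) \ge (d,0)$ then $\pi_{i}$ is non-normal (as $d \ge 1$), so the induction hypothesis gives some $\overline{\pi_{i}}$ with $\pi_{i} \oneStep \overline{\pi_{i}}$.

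The case analysis is driven by the rows of Definition~\ref{def:one-step-red}. First, if some ancilliary subproof $\pi_{i}$ with $i > 1$ has $\cutrank(\pi_{i}) \ge (d,0)$, then by the observation and the IH, row 3 applies. So assume from now on that $\cutrank(\pi_{i}) < (d,0)$ for all $i > 1$.

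Now split on $\pi_{1}$.
\begin{itemize}[nosep]
\item If $\cutrank(\pi_{1}) < (d,0)$, then, since the $d$-component must come from somewhere, $\degree(\pi) = d$. Row 1 then applies: $\pi$ is a cut of degree $d$, every $\pi_{i}$ with $i>1$ is below $(d,0)$, and $\cutrank(\pi_{1}) < (d,0)$.
\item If $\cutrank(\pi_{1}) \ge (d,0)$, then either $\lrof(\pi_{1}) \neq \rncase$ or $\degree(\pi) < d$, in which case the IH on $\pi_{1}$ yields $\overline{\pi_{1}}$ and row 2 applies. Or else $\lrof(\pi_{1}) = \rncase$ and $\degree(\pi) = d$, in which case the hypotheses of row 1 hold directly, using the disjunct $\lrof(\pi_{1}) = \rncase$, and $\contractum(\pi)$ is defined because $\pi$ is a cut.
\end{itemize}

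The main point needing care is that these cases are exhaustive. In particular, I must check that when row 1 is invoked, $\pi$ genuinely is a cut, so that $\contractum$ is defined; this holds because $\degree(\pi) = d \ge 1$ and non-cuts have degree $0$. I must also handle unary rules with $k = 1$ (e.g.\ $\rnemp{\bot}$, $\rnabs{}$, $\rnleft{}$). For these the "ancilliary" conditions are vacuous. For rules like $\rnabs{}$ or $\rnpair$, which are never cuts, the cut-rank must come from a subproof, and the row 2/row 3 cases cover it: the first immediate subproof is treated as $\pi_{1}$ in the row 2 pattern, and the condition $\lrof(\pi_{1}) \neq \rncase$ or $\degree(\pi) < d$ holds since $\degree(\pi) = 0 < d$. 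No deeper obstacle is expected; the lemma is essentially the observation that the three rows of Definition~\ref{def:one-step-red} were designed to partition the non-normal case.
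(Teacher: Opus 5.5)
Your proposal is correct and follows the paper's own argument: structural induction on $\pi$, using row 3 or row 2 via the induction hypothesis when a subproof has cut-rank $\ge (d,0)$ (subject to the $\rncase$ proviso for row 2), and otherwise showing $\degree(\pi) = d$ so that row 1 applies. Your extra checks are small clarifications of points the paper leaves implicit: that $\pi \neq \rnax{}$, that $\degree(\pi) \le d$ (so the paper's ``$\degree(\pi) \ge d$'' is really an equality), and that $\contractum(\pi)$ is defined because $d \ge 1$.
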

\begin{proof}
    We assume that the claim is true for all proper subproofs of $\pi$ and prove it for $\pi$ itself. Suppose $\pi = \rnrule(\pi_{1}, \ldots, \pi_{k})$ and $\cutrank(\pi) = (d,n)$. 
    \begin{itemize}[nosep]
        \item Suppose rule 3 applies. Then $\pi_{i}$ is not normal, and by IH, there exists $\overline{\pi_{i}}$ such that $\pi_{i} \oneStep \overline{\pi_{i}}$, and we can choose $\overline{\pi}$ to be $\rnrule(\pi_{1}, \ldots, \overline{\pi_{i}}, \ldots \pi_{k})$.  
        \item Suppose rule 2 applies. Then $\pi_{1}$ is not normal, and by IH, there exists $\overline{\pi_{1}}$ such that $\pi_{1} \oneStep \overline{\pi_{1}}$, and we can choose $\overline{\pi}$ to be $\rnrule(\overline{\pi_{1}}, \ldots, \pi_{k})$.   
        \item Suppose neither rule 2 nor rule 3 apply. Then $\cutrank(\pi_{i}) < (d,0)$ for $2 \le i \le k$, and either $\cutrank(\pi_{1}) < (d,0)$, or $\lrof(\pi_{1}) = \rncase$ and $\degree(\pi) \ge d$. But in the former case, $\degree(\pi) = d$, as otherwise the maximum degree of any cut in $\pi$ (which is the maximum among $\degree(\pi)$ and the maximum degrees of all the $\pi_{i}$'s) would be less than $d$, contradicting the cut-rank of $\pi$. So in either case, rule 1 applies, and we choose $\overline{\pi}$ to be $\contractum(\pi)$. 
        \qedhere 
    \end{itemize}
\end{proof}

\begin{lemma}[Cut-rank after a reduction]\label{lem:cr-red-red}
    If $\pi \oneStep \overline{\pi}$, then $\assof(\overline{\pi}) \subseteq \assof(\pi)$, $\concof(\overline{\pi}) = \concof(\pi)$ and $\cutrank(\overline{\pi}) < \cutrank(\pi)$. 
\end{lemma}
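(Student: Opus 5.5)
The proof is by induction on the derivation of $\pi \oneStep \overline{\pi}$, which amounts to structural induction on $\pi$. I split into cases on which rule of Definition~\ref{def:one-step-red} was used. The statements about $\assof$ and $\concof$ are easy throughout. For rule~1 they are exactly Lemma~\ref{lem:ass-ctm}. For rules~2 and~3, the IH gives $\assof(\overline{\pi_{i}}) \subseteq \assof(\pi_{i})$ and $\concof(\overline{\pi_{i}}) = \concof(\pi_{i})$. The latter keeps $\rnrule(\ldots,\overline{\pi_{i}},\ldots)$ well formed and gives the same conclusion. The table for $\assof$ is monotone in the assumption sets of the immediate subproofs, since it only uses unions and removal of the fixed discharged formulas. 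So the real work is the cut-rank.

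\textbf{Rule 1.} Here $\degree(\pi) = d$, the ancillary subproofs have cut-rank $< (d,0)$, and $\lrof(\pi_{1}) = \rncase$ or $\cutrank(\pi_{1}) < (d,0)$. These are precisely the hypotheses of Lemma~\ref{lem:cr-red-ctm}, which then gives $\cutrank(\contractum(\pi)) < \cutrank(\pi)$.

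\textbf{Rules 2 and 3.} Let $\cutrank(\pi) = (d,n)$. Since $\cutrank(\pi_{i}) \ge (d,0)$ and $\cutrank(\pi_{i})$ is a $\rankadd$-factor of $(d,n)$, we get $\cutrank(\pi_{i}) = (d,m)$ with $m \le n$. The IH gives $\cutrank(\overline{\pi_{i}}) < (d,m)$. I will use a simple algebraic fact about $\rankadd$. Let $x < y = (d,m)$, and let $z = z_{1} \rankadd \cdots \rankadd z_{r}$ where every $z_{j}$ has first component $\le d$. Then $x \rankadd z < y \rankadd z$, because the count of degree-$d$ contributions strictly drops (or vanishes) and the other contributions are unchanged.

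For rule~3 ($i>1$), the root contribution $(\degree(\pi),\cutwt(\pi))$ is unchanged. This holds because whether $\pi$ is a cut, its cut-formula, and its weight depend only on $\lrof(\pi_{1})$, $\concof(\pi_{1})$ and $\sizeof(\pi_{1})$, and $\pi_{1}$ is untouched. So the fact above applies directly.

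\textbf{Main obstacle: rule 2.} Here the main subproof changes, so the root contribution may change. I must show that the new root contribution is either unchanged, or has degree $< d$, or has degree $d$ but does not become a new degree-$d$ cut with large weight. If $\pi_{1} \oneStep \overline{\pi_{1}}$ was by rule~2 or~3, then $\lrof(\overline{\pi_{1}}) = \lrof(\pi_{1})$. The conclusion is also unchanged, so the root is a cut iff it was before, with the same degree. The weight is also the same unless $\lrof(\pi_{1}) = \rncase$. In that case the side condition of rule~2 forces $\degree(\pi) < d$, so the root contribution is invisible at level $d$.

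If instead $\pi_{1}$ was contracted by rule~1, I use the trichotomy of Lemma~\ref{lem:ass-ctm}. If the last rule is preserved, I argue as before. If $\formsize{\concof(\pi_{1})} < \degree(\pi_{1}) \le d$, then any root cut of $\overline{\pi}$ has degree $< d$; this covers the permutation $\rnapp(\rncase(\ldots),\rho)$ turning into a $\rncase$. If $\lrof(\pi_{1}) = \rncase$, then again $\degree(\pi) < d$ by the side condition. If $\lrof(\pi_{1}) = \rnemp{\bot}$, then $\concof(\pi_{1}) = \bot$, which cannot be the main premise of $\rnapp$, $\rnfst$, $\rnsnd$ or $\rncase$, so the root is not a cut either before or after.

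In every case the root contribution is either unchanged or has degree $< d$. The degree-$d$ count of $\pi$ therefore strictly decreases, via the algebraic fact applied to the changed subproof $\overline{\pi_{1}}$. I expect this careful bookkeeping of the root contribution in rule~2 to be the delicate part.
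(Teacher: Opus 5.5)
Your proposal is correct and follows the paper's proof essentially step for step. Rule~1 goes through Lemma~\ref{lem:cr-red-ctm}, and rule~3 works because the root contribution is untouched. Rule~2 is handled by checking, via Lemma~\ref{lem:ass-ctm} and the side condition $\degree(\pi) < d$ when $\lrof(\pi_{1}) = \rncase$, that the root contribution is either unchanged or of degree $< d$, followed by the same $\rankadd$ bookkeeping.

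One minor slip: in the $\lrof(\pi_{1}) = \rnemp{\bot}$ subcase the root can itself be $\rnemp{\bot}$, which is a cut both before and after. Its contribution stays $(1,1)$, so your conclusion still holds. Also, your $\rankadd$ inequality tacitly uses $m \ge 1$, which holds because every cut has positive weight.
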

\begin{proof}
    The statement about $\assof(\overline{\pi})$ and $\concof(\overline{\pi})$ is proved by a simple induction on the structure of $\pi$, using the corresponding claim about $\contractum$ in Lemma~\ref{lem:ass-ctm}. We prove the claim about cut-rank by induction on the structure of proofs. So suppose the statement is true for all smaller proofs, and consider $\pi = \rnrule(\pi_{1}, \ldots, \pi_{k})$ with $\cutrank(\pi) = (d,n)$ and $d > 0$. Suppose also that $\cutrank(\pi_{j}) = (d_{j}, n_{j})$ for each $j \le k$. We do a case analysis on the rule (in the table of Definition~\ref{def:one-step-red}) that yields $\pi \oneStep \overline{\pi}$. 
    \begin{itemize}
        \item Suppose rule 1 applies. Then $\overline{\pi} = \contractum(\pi)$, and it follows from Lemma~\ref{lem:cr-red-ctm} that $\cutrank(\overline{\pi}) < \cutrank(\pi)$. 
        \item Suppose rule 2 applies. Then $\cutrank(\pi_{1}) \ge (d,0)$, and $\lrof(\pi_{1}) \neq \rncase$ or $\degree(\pi) < d$. For concreteness, assume that $k = 3$, $d_{1} = d$, $\pi_{1} \oneStep \overline{\pi_{1}}$, and $\overline{\pi} = \rnrule(\overline{\pi_{1}}, \pi_{2},\pi_{3})$. Suppose $\cutrank(\overline{\pi_{1}}) = (d', n')$. By IH, we have $(d',n') < (d_{1}, n_{1})$. Let $(e,m) = (\degree(\pi), \cutwt(\pi))$ and $(e',m') = (\degree(\overline{\pi}), \cutwt(\overline{\pi}))$. We consider three cases now.
        \begin{itemize}[nosep]
            \item Suppose $\lrof(\pi_{1}) \neq \rncase$ and rule 2 or 3 for $\oneStep$ applies to $\pi_{1}$. In this case, we have $\lrof(\overline{\pi_{1}}) = \lrof(\pi_{1})$. This means that $e = e'$ and $m = m' = 1$, so $(e',m') \le (e,m)$. 
            \item Suppose $\lrof(\pi_{1}) \neq \rncase$ and rule 1 for $\oneStep$ applies to $\pi_{1}$. So $\pi_{1}$ is a cut of degree $d$, and by Lemma~\ref{lem:ass-ctm} we have $\formsize{\concof(\overline{\pi_{1}})} < d$ or $\lrof(\pi_{1}) \in \{\rnemp{\bot}, \rncase\}$. But since $\lrof(\pi_{1}) \neq \rncase$, we have $\formsize{\concof(\overline{\pi_{1}})} < d$ or $\lrof(\pi_{1}) = \rnemp{\bot}$. It follows that $(e',m') < (d,0)$ or $(e',m') = (e,m) = (1,1)$.  
            \item Suppose $\lrof(\pi_{1}) = \rncase$. Then $\degree(\pi) < d$ (since the conditions for rule 2 are satisfied). Now if $\overline{\pi}$ is a cut, so is $\pi$, and both degree $< d$. So $(e',m') < (d,0)$. 
        \end{itemize}
        So we see that in all cases, $(e',m') < (d,0)$ or $(e',m') \le (e,m)$, and we further have that $d_{1} = d = \max(e,d_{1},d_{2},d_{3})$. Therefore the following inequality holds, as can be checked easily. 
        {
            \setlength{\abovedisplayskip}{2pt}
            \setlength{\belowdisplayskip}{2pt}
            \[
                (e',m') \rankadd (d',n') \rankadd (d_{2},n_{2}) \rankadd (d_{3},n_{3}) < (e,m) \rankadd (d_{1},n_{1}) \rankadd (d_{2},n_{2}) \rankadd (d_{3},n_{3}).
            \]
        }        
        But this just means that $\cutrank(\overline{\pi}) < \cutrank(\pi)$. 
        \item Suppose rule 3 applies. Then there is $i > 1$ s.t.\ $\cutrank(\pi_{i}) \ge (d,0)$ and $\pi_{i} \oneStep \overline{\pi_{i}}$. For concreteness, assume that $k = 3$, $i = 2$, $d_{2} = d$, and $\overline{\pi} = \rnrule(\pi_{1}, \overline{\pi_{2}}, \pi_{3})$. Suppose $\cutrank(\overline{\pi_{2}}) = (d', n')$. By IH, we have $(d',n') < (d_{2}, n_{2})$. Since $\pi_{1}$ is the main subproof of both $\pi$ and $\overline{\pi}$, we see that $\overline{\pi}$ is a cut iff $\pi$ is a cut with the same degree and weight. Letting $(e,m) = (\degree(\pi), \cutwt(\pi))$, we have that $d_{2} = \max(e, d_{1},d_{2},d_{3})$, and hence 
        {
            \setlength{\abovedisplayskip}{2pt}
            \setlength{\belowdisplayskip}{2pt}
            \[
                (e,m) \rankadd (d_{1},n_{1}) \rankadd (d',n') \rankadd (d_{3},n_{3}) < (e,m) \rankadd (d_{1},n_{1}) \rankadd (d_{2},n_{2}) \rankadd (d_{3},n_{3}),
            \]
        }        
        which means that $\cutrank(\overline{\pi}) < \cutrank(\pi)$.
        \qedhere
    \end{itemize}
\end{proof}

\begin{theorem}[Normalization]\label{thm:normal}
    If\, $X \intDerives \alpha$, there is a normal proof of\, $Y \vdash \alpha$ for some $Y \subseteq X$. 
\end{theorem}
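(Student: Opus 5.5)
The argument is a well-founded induction on cut-rank, using the one-step reduction of Definition~\ref{def:one-step-red} together with Lemmas~\ref{lem:red-exists} and~\ref{lem:cr-red-red}.

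Suppose $X \intDerives \alpha$. By definition there is a proof $\pi$ of $Z \vdash \alpha$ for some $Z \subseteq X$. I will prove the following stronger claim by well-founded induction on $\cutrank(\pi)$, ordered lexicographically on $\mathbb{N}\times\mathbb{N}$:
\begin{quote}
for every proof $\pi$ there is a normal proof $\pi^{*}$ with $\assof(\pi^{*}) \subseteq \assof(\pi)$ and $\concof(\pi^{*}) = \concof(\pi)$.
\end{quote}
The theorem then follows by taking $Y = \assof(\pi^{*}) \subseteq Z \subseteq X$.

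For the induction step, fix $\pi$ and assume the claim for all proofs of strictly smaller cut-rank.

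\emph{Normal case.} If $\pi$ is normal, take $\pi^{*} = \pi$.

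\emph{Non-normal case.} If $\pi$ is not normal, Lemma~\ref{lem:red-exists} gives some $\overline{\pi}$ with $\pi \oneStep \overline{\pi}$. Lemma~\ref{lem:cr-red-red} then yields three facts:
\begin{itemize}
\item $\assof(\overline{\pi}) \subseteq \assof(\pi)$;
\item $\concof(\overline{\pi}) = \concof(\pi)$;
\item $\cutrank(\overline{\pi}) < \cutrank(\pi)$.
\end{itemize}
By the induction hypothesis applied to $\overline{\pi}$, there is a normal $\pi^{*}$ with $\assof(\pi^{*}) \subseteq \assof(\overline{\pi})$ and $\concof(\pi^{*}) = \concof(\overline{\pi})$. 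Chaining the inclusions and equalities gives the claim for $\pi$.

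The induction is legitimate because the lexicographic order on pairs of natural numbers is a well-ordering, so there is no infinite descending chain of cut-ranks. Equivalently, iterating $\oneStep$ from $\pi$ must terminate after finitely many steps, and it can only terminate at a normal proof, since Lemma~\ref{lem:red-exists} guarantees a further step from any non-normal proof.

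I expect no real obstacle at this point: all the substantive work lives in Lemmas~\ref{lem:cr-red-ctm} and~\ref{lem:cr-red-red}. The only care needed is to carry the assumption-set inclusion along the chain, so that the final set $Y$ is a subset of $X$ rather than equal to it. The definition of $\intDerives$ already allows exactly this.
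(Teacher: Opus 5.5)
Your proposal is correct and matches the paper's proof: well-founded induction on $\cutrank(\pi)$, taking $\pi$ itself when it is normal, and otherwise taking a step $\pi \oneStep \overline{\pi}$ from Lemma~\ref{lem:red-exists} and applying the induction hypothesis to $\overline{\pi}$. Your explicit appeal to Lemma~\ref{lem:cr-red-red} for the assumption inclusion, the unchanged conclusion and the drop in cut-rank states what the paper uses implicitly when it cites only Lemma~\ref{lem:red-exists}.
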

\begin{proof}
    We show that for any proof $\pi$, there is a normal proof $\pi^{0}$ with a subset of assumptions and same conclusion as $\pi$. Suppose the claim is true for all proofs of smaller cut-rank. If $\pi$ itself is normal, then we take $\pi^{0} \coloneq \pi$. Otherwise, by Lemma~\ref{lem:red-exists}, there is a $\overline{\pi}$ with a subset of assumptions and same conclusion as $\pi$, and such that $\cutrank(\overline{\pi}) < \cutrank(\pi)$. By induction hypothesis, there is normal proof $\pi^{0}$ with a subset of assumptions and same conclusion as $\overline{\pi}$, and we are done. 
\end{proof}

A proof $\pi$ is \defemph{neutral} if $\lrof(\pi) \in \{\rnax{}, \rnapp, \rnfst, \rnsnd\}$, or if $\concof(\pi) = \bot$.   
\begin{theorem}[Subformula Property]
    If $\pi$ is a normal proof, then $\formsof(\pi) \subseteq \sfof(\assof(\pi) \cup \{\concof(\pi)\})$. Furthermore, if $\pi$ is neutral, then $\formsof(\pi) \subseteq \sfof(\assof(\pi))$. 
\end{theorem}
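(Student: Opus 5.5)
We prove both claims together by structural induction on the normal proof $\pi$: at each step we establish the general inclusion $\formsof(\pi) \subseteq \sfof(\assof(\pi) \cup \{\concof(\pi)\})$ and, when $\pi$ is neutral, the sharper inclusion $\formsof(\pi) \subseteq \sfof(\assof(\pi))$. Every subproof of a normal proof is normal, so the induction hypothesis is available for all immediate subproofs. We use repeatedly that $\formsof(\rnrule(\pi_{1},\ldots,\pi_{k})) = \{\concof(\pi)\} \cup \bigcup_{i}\formsof(\pi_{i})$, and that $\sfof$ is monotone and closed under taking subformulas.

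The key preliminary observation concerns the main subproof of an elimination rule. Suppose $\pi$ is normal and $\lrof(\pi) \in \{\rnapp, \rnfst, \rnsnd, \rncase\}$, with main subproof $\pi_{1}$. Since $\pi$ is not a cut, $\lrof(\pi_{1}) \notin \{\rnemp{}, \rnabs{}, \rnpair, \rnleft{}, \rnright{}, \rncase\}$. Hence $\lrof(\pi_{1}) \in \{\rnax{}, \rnapp, \rnfst, \rnsnd\}$, so $\pi_{1}$ is neutral. For $\pi = \rnemp{\alpha}(\mu)$, the main subproof $\mu$ has conclusion $\bot$ and is therefore neutral by definition. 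Moreover $\alpha \neq \bot$, since otherwise $\pi$ would be a cut. It follows that $\rnemp{}$-proofs are never neutral. In each elimination case, the induction hypothesis applied to the neutral main subproof places its conclusion (the major premise) in $\sfof(\assof(\pi_{1}))$.

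The case analysis then proceeds as follows.
\begin{itemize}
\item $\rnax{\alpha}$ is trivial.
\item $\rnabs{\alpha}(\mu)$: $\alpha$ and $\beta$ are subformulas of $\alpha\impl\beta$, so $\sfof(\assof(\mu) \cup \{\beta\}) \subseteq \sfof((\assof(\mu)-\alpha) \cup \{\alpha\impl\beta\})$.
\item $\rnpair$, $\rnleft{}$, $\rnright{}$: similar, using that the premises' conclusions are subformulas of the conclusion. None of these is neutral, since the conclusion is never $\bot$.
\item $\rnemp{\alpha}(\mu)$: $\formsof(\mu) \subseteq \sfof(\assof(\mu))$ because $\mu$ is neutral, and we add $\alpha$.
\item $\rnapp(\mu,\nu)$: $\alpha\impl\beta \in \sfof(\assof(\mu))$ because $\mu$ is neutral. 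Hence both $\alpha$ and $\beta = \concof(\pi)$ lie in $\sfof(\assof(\pi))$, and the induction hypothesis for $\nu$ gives $\formsof(\nu) \subseteq \sfof(\assof(\nu) \cup \{\alpha\}) \subseteq \sfof(\assof(\pi))$. This yields the sharper, neutral bound for $\pi$.
\item $\rnfst$, $\rnsnd$: analogous to $\rnapp$.
\end{itemize}

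The case I expect to need the most care is $\pi = \rncase(\chi,\mu,\nu)$ with $\concof(\chi) = \alpha\disj\beta$ and conclusion $\delta$, because of the discharged assumptions and because such a $\pi$ is neutral exactly when $\delta = \bot$. Since $\chi$ is neutral, $\alpha\disj\beta \in \sfof(\assof(\chi))$, so $\alpha, \beta \in \sfof(\assof(\pi))$. Then, using $\assof(\mu) \subseteq (\assof(\mu)-\alpha) \cup \{\alpha\}$, the induction hypothesis gives
\[
\formsof(\mu) \subseteq \sfof(\assof(\mu) \cup \{\delta\}) \subseteq \sfof(\assof(\pi) \cup \{\delta\}),
\]
and similarly for $\nu$ with $\beta$. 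This proves the general claim. In the neutral case $\delta = \bot$, the subproofs $\mu$ and $\nu$ are themselves neutral because their conclusion is $\bot$. The sharper induction hypothesis then removes $\delta$ from the bound, giving $\formsof(\mu) \subseteq \sfof(\assof(\mu)) \subseteq \sfof(\assof(\pi))$ by the same discharge argument. In particular $\bot \in \formsof(\mu) \subseteq \sfof(\assof(\pi))$, so $\formsof(\pi) \subseteq \sfof(\assof(\pi))$ as required.
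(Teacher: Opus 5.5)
Your proposal is correct and follows essentially the same route as the paper. Both arguments are a structural induction on the normal proof, resting on the observation that the major premise of an elimination (and the premise of $\rnemp{}$) is neutral. Its conclusion therefore already lies in $\sfof$ of the open assumptions, and this absorbs the formulas discharged in the $\rncase$ case.

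The only difference is in the neutral $\rncase$ subcase. You apply the sharper hypothesis to $\mu,\nu$ directly, which is a slightly tidier version of the paper's separate check that $\bot \in \sfof(\assof(\pi))$.
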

\begin{proof}
    The proof is by induction on the structure of proofs. So suppose the claim is true for all proper subproofs $\mu$ of $\pi$ (which are all normal), and consider the various cases for $\pi$. 
    \begin{description}[nosep]
        \item[$\pi = \rnax{\alpha}$] Then $\formsof(\pi) = \{\alpha\} \subseteq \sfof(\{\alpha\})$, as desired. 
        \item[$\pi = \rnemp{\alpha}(\mu)$] Let $\assof(\pi) = \assof(\mu) = X$. Note that $\concof(\mu) = \bot$, so by IH, we have $\formsof(\mu) \subseteq \sfof(X)$. It follows that $\formsof(\pi) = \formsof(\mu) \cup \{\alpha\} \subseteq \sfof(X \cup \{\alpha\})$, as desired. It is not possible that $\alpha = \bot$, since the cut $\rnemp{\bot}(\mu)$ cannot occut in a normal proof. 
        \item[$\pi = \rnabs{\alpha}(\mu)$] Let $\assof(\mu) = X$ and $\concof(\mu) = \beta$. Then $\assof(\pi) = X-\alpha$ and $\concof(\pi) = \alpha\impl\beta$. By IH, we have $\formsof(\mu) \subseteq \sfof(X\cup\{\beta\})$. But $\sfof(X\cup\{\beta\}) \subseteq \sfof((X-\alpha)\cup\{\alpha\impl\beta\})$. It follows that $\formsof(\pi) \subseteq \{\alpha\impl\beta\} \cup \formsof(\mu) \subseteq \sfof((X-\alpha) \cup \{\alpha\impl\beta\})$, as desired. 
        \item[$\pi = \rnapp(\mu,\nu)$] Let $\assof(\mu) = X$ and $\assof(\nu) = Y$. Let $\concof(\mu) = \alpha\impl\beta$. Then $\assof(\pi) = X\cup{Y}$, $\concof(\nu) = \alpha$ and $\concof(\pi) = \beta$. Since $\pi$ is normal, $\mu$ is neutral, so by IH, $\formsof(\mu) \subseteq \sfof(X)$. In particular, $\{\alpha, \beta, \alpha\impl\beta\} \subseteq \sfof(X)$. By IH, $\formsof(\nu) \subseteq \sfof(Y \cup \{\alpha\}) \subseteq \sfof(X\cup{Y})$. It follows that 
        {
            \setlength{\abovedisplayskip}{2pt}
            \setlength{\belowdisplayskip}{2pt}
            \[
                \formsof(\pi) = \{\beta\} \cup \formsof(\mu) \cup \formsof(\nu) \subseteq \sfof(X \cup Y),
            \]
        }        
        as desired. 
        \item[$\lrof(\pi) \in \{\rnpair, \rnfst, \rnsnd, \rnleft{}, \rnright{}\}$] The argument is similar to one of the above cases. 
        \item[$\pi = \rncase(\chi, \mu, \nu)$] Let $\assof(\chi) = X$, $\concof(\chi) = \alpha\disj\beta, \assof(\mu) = Y, \assof(\nu) = Z$ and $\concof(\pi) = \concof(\mu) = \concof(\nu) = \delta$. Then $\assof(\pi) = X\cup(Y-\alpha)\cup(Z-\beta)$. Since $\pi$ is normal, $\chi$ is neutral. So $\formsof(\chi) \subseteq \sfof(X)$, by IH. In particular, $\{\alpha,\beta,\alpha\disj\beta\} \subseteq \sfof(X)$. Also by IH, 
        {
            \setlength{\abovedisplayskip}{2pt}
            \setlength{\belowdisplayskip}{2pt}
            \[
                \formsof(\mu) \cup \formsof(\nu) \subseteq \sfof(\{\delta\} \cup Y \cup Z) \subseteq \sfof(X \cup (Y-\alpha)\cup (Z-\beta) \cup \{\delta\}). 
            \]
        }        
        It follows that 
                {
            \setlength{\abovedisplayskip}{2pt}
            \setlength{\belowdisplayskip}{2pt}
            \[
                \formsof(\pi) = \{\delta\} \cup \formsof(\chi) \cup \formsof(\mu) \cup \formsof(\nu) \subseteq \sfof(X \cup (Y-\alpha)\cup (Z-\beta) \cup \{\delta\}),
            \]
        }        
        as desired. If $\delta = \bot$, then $\concof(\mu) = \bot$, so by IH, $\formsof(\mu) \subseteq \sfof(Y)$. Since $\bot \in \formsof(\mu)$, we have $\bot \in \sfof(Y)$. If $\bot \notin \sfof(Y-\alpha)$, then $\bot \in \sfof(\alpha) \subseteq \sfof(X)$, so $\bot \in \sfof(X \cup (Y-\alpha)\cup (Z-\beta))$, as desired. 
        \qedhere
    \end{description}
\end{proof}

\section{Lean formalization}~\label{sec:formalization}
We have formalized all the definitions and proofs of the previous sections in Lean. The code repository is hosted at~\cite{NatDed}. Our definitions of proofs, and various parameters associated with a proof, like size, cutrank, are inductive, and the notions of contractum, graft and $\pi \oneStep \overline{\pi}$ are inductive, and so can be transcribed into Lean more or less directly. But our definition of important notions like cut-rank and one-step reduction ($\pi \oneStep \overline{\pi}$) are quite succinct, as we compress many cases under a common pattern using the $\rnrule(\cdots)$ notation. In Lean, we have to elaborate this by writing out all the cases. The main proofs are more or less elaborations of the proofs in Section~\ref{sec:njNorm}, at the cost of repeating many arguments for different explicit cut patterns. For instance, to argue that the cut-rank reduces after contracting a cut whose main subproofs ends with the $\rncase$ rule, we have to consider four different lemmas explicitly, based on whether the last rule of the cut itself is $\rnapp$, $\rnfst$, $\rnsnd$ or $\rncase$. But all the key ideas are already  presented in Section~\ref{sec:njNorm}. 

The proof of Lemma~\ref{lem:red-exists} suggests a way to turn the one-step reduction relation $\oneStep$ into a one-step reduction \emph{function} $\os$. The pseudocode is presented below (we use $\maxdeg(\pi)$ to denote the maximum degree among all cuts in $\pi$). 

\begin{center}
    \begin{algorithmic}
    \Function{\os}{$\pi$}
        \Switch{$\pi$}
            \Case{$\rnax{\alpha}$}
                \quad \Return $\pi$
            \EndCase
            \Case{$\rnemp{\alpha}(\mu)$}
                \If {$\maxdeg(\mu) = \maxdeg(\pi)$}
                    \State \Return $\rnemp{\alpha}(\os(\mu))$
                \Else 
                    \ \Return $\contractum(\pi)$
                \EndIf
            \EndCase 
            \Case{$\rnrule(\mu), \rnrule \in \{\rnfst, \rnsnd\}$}
                \If {$\maxdeg(\mu) = \maxdeg(\pi) \conj (\degree(\pi) < \maxdeg(\pi) \disj \lrof(\mu) \neq \rncase)$}
                    \State \Return $\rnrule(\os(\mu))$ 
                \Else 
                    \ \Return $\contractum(\pi)$
                \EndIf
            \EndCase  
            \Case{$\rnapp(\mu, \nu)$}
                \If{$\maxdeg(\nu) = \maxdeg(\pi)$}
                    \State \Return $\rnapp(\mu, \os(\nu))$ 
                \ElsIf{$\maxdeg(\mu) = \maxdeg(\pi) \conj (\degree(\pi) < \maxdeg(\pi) \disj \lrof(\mu) \neq \rncase)$}
                    \State \Return $\rnapp(\os(\mu), \nu)$
                \Else
                    \ \Return $\contractum(\pi)$ 
                \EndIf
            \EndCase 
            \Case{$\rncase(\chi, \mu, \nu)$}
                \If{$\maxdeg(\nu) = \maxdeg(\pi)$}
                    \State \Return $\rncase(\chi, \mu, \os(\nu))$ 
                \ElsIf{$\maxdeg(\mu) = \maxdeg(\pi)$}
                    \State \Return $\rncase(\chi, \os(\mu), \nu)$ 
                \ElsIf{$\maxdeg(\chi) = \maxdeg(\pi) \conj (\degree(\pi) < \maxdeg(\pi) \disj \lrof(\chi) \neq \rncase)$}
                    \State \Return $\rnapp(\os(\chi), \mu, \nu)$
                \Else
                    \ \Return $\contractum(\pi)$ 
                \EndIf
            \EndCase 
            \Case{$\rnrule(\mu), \rnrule \in \{\rnabs{}, \rnleft{}, \rnright{}\}$}
                \quad \Return $\rnrule(\os(\mu))$
            \EndCase
            \Case{$\rnpair(\mu,\nu)$}
                \If{$\maxdeg(\nu) = \maxdeg(\pi)$}
                    \State \Return $\rnpair(\mu, \os(\nu))$
                \Else
                    \ \Return $\rnpair(\os(\mu), \nu)$
                \EndIf
            \EndCase
        \EndSwitch 
    \EndFunction 
    \end{algorithmic}
\end{center}

The $\os$ function is used to define the normalization function, and both the functions been coded as (terminating) recursive functions in the Lean formalization, and can be executed via the commands \texttt{\#eval oneStep pi} and \texttt{\#eval normalize pi}. 
\begin{algorithmic}
\Function{\mbox{\textsc{normalize}}}{$\pi$}
    \If{$\maxdeg(\pi) = 0$}
        \ \Return $\pi$ 
    \Else 
        \ \Return $\textsc{normalize}(\os(\pi))$
    \EndIf
\EndFunction
\end{algorithmic}

\section{\texorpdfstring{$\classlogic$}{\textsf{NK}}: Natural Deduction for Classical Propositional Logic}~\label{sec:nk}
Classical logic $\classlogic$ has the rule \defemph{reductio ad absurdum} rule $\rnraa{\alpha}$ -- to conclude $\alpha$, assume $\neg\alpha$ and derive a contradiction. We adopt a version of $\rncase$ where the conclusion is always $\bot$. As we will see, $\rnemp{}$ can be simulated by $\rnraa{}$, and so we drop it. The rest of the rules are the same. The key changes in the definition of proofs are given below. 
\begin{definition}[$\classlogic$ proofs] 
    $\classlogic$ proofs are built using the constructors 
    \[
        \rnax{}, \rnraa{}, \rnabs{}, \rnapp, \rnpair, \rnfst, \rnsnd, \rnleft{}, \rnright{}, \rncase
    \] 
    where 
    \begin{itemize}[nosep]
        \item If $\mu \in \Pi_{\bot}$, then $\rnraa{\alpha}(\mu) \in \Pi_{\alpha}$.
        \item If $\chi \in \Pi_{\alpha\disj\beta}$ and $\mu, \nu \in \Pi_{\bot}$, then $\rncase(\chi, \mu, \nu) \in \Pi_{\bot}$. 
    \end{itemize}
    The rest of the constructors behave exactly as in $\intlogic$. The basic definitions of $\concof$, $\assof$, $\formsof$, $\spof$ etc.\ are as before, except for the one change -- $\assof(\rnraa{\alpha}(\mu)) = \assof(\mu) - \neg\alpha$.   
\end{definition} 

The $\rnemp{}$ rule may be safely dropped, since all instances of it in a proof can be replaced by $\rnraa{}$, yielding a proof with possibly fewer assumptions. More importantly, the general $\rncase$ rule with conclusion $\delta$ can be simulated using the restricted $\rncase$ and $\rnraa{\delta}$ as follows:
\[
    \rnraa{\delta}(\rncase(\chi, \rnapp(\rnax{\neg\delta}, \mu), \rnapp(\rnax{\neg\delta}, \nu))).
\] 

Because of the simplified $\rncase$, we can never have an elimination rule whose main subproof ends with $\rncase$. Thus no cut is of the form $\rnrule(\rncase(\dots), \dots)$. This means that we do not have to deal with the complications associated with permutative conversions. We can also define the weight of a cut to be $1$, and can go back to the simple strategy of reducing the rightmost uppermost cut of maximum degree. But there are new complications that arise with the $\rnraa{}$ rule. In the case of $\rnemp{}$, we could simply contract a cut of the form $\rnapp(\rnemp{\alpha\impl\beta}(\mu), \nu)$ to $\rnemp{\beta}(\mu)$. But we cannot do the same with $\rnraa{}$ while still preserving the set of open assumptions of the proof. What is needed is a new set of transformations that operate at the level of assumptions. Consider the following proof $\pi$, with $\mu$ having (possibly several instances of) the assumption $\neg(\alpha\impl\beta)$, which have been discharged while forming $\pi$.
\[
    \begin{prooftree}
        \closedAss{x}{\neg(\alpha\impl\beta)}
        \ellipsis{$\mu$}{\bot}
        \taggedRaa{x}{\alpha\impl\beta}
        \hypo{}
        \ellipsis{$\nu$}{\alpha}
        \infer2[\rnapp]{\beta}
    \end{prooftree}
\]
This is a new form of cut that arises because of the interaction of $\rnraa{}$ with an elimination rule. It would be ideal to transform this to a proof of the form
\[
    \begin{prooftree}
        \closedAss{x}{\neg\beta}
        \ellipsis{$\mu$}{\bot}
        \taggedRaa{x}{\beta}
    \end{prooftree}
\]
where the all undischarged instances of the assumption $\neg(\alpha\impl\beta)$ in $\mu$ have been replaced by $\neg\beta$. To ensure that we still have a well-formed proof, we need to consider the context in which the assumption appears in $\mu$, and we also need to use the proof $\nu$. We present several illustrative cases below, which will motivate the formal definitions that will be presented later. 

Consider a proof of the form 
\[
    \begin{prooftree}
        \hypo{\neg(\alpha\impl\beta)} 
        \hypo{}
        \ellipsis{$\rho$}{\gamma}
        \infer2[\rnpair]{\neg(\alpha\impl\beta) \conj \gamma}
        \ellipsis{$\mu$}{\bot}
    \end{prooftree}
\]
Given a proof $\nu$ with conclusion $\alpha$, we can eliminate the assumption $\neg(\alpha\impl\beta)$ (and replace it with $\neg\beta$) as follows.
\[
    \begin{prooftree}
        \hypo{\neg\beta}
        \closedAss{x}{\alpha\impl\beta}
        \hypo{}
        \ellipsis{$\nu$}{\alpha}
        \infer2[\rnapp]{\beta}
        \infer2[\rnapp]{\bot}
        \taggedAbs{x}{\neg(\alpha\impl\beta)}
        \hypo{}
        \ellipsis{$\rho$}{\gamma}
        \infer2[\rnpair]{\neg(\alpha\impl\beta) \conj \gamma}
        \ellipsis{$\mu$}{\bot}
    \end{prooftree}
\]
Stated precisely using proof terms, the transformation is as follows: 
\[
    \rnax{\neg(\alpha\impl\beta)} \quad \rightsquigarrow \quad \rnabs{\alpha\impl\beta}(\rnapp(\rnax{\neg\beta}, \rnapp(\rnax{\alpha\impl\beta}, \nu))).
\] 
But we need to ensure that we do not introduce new cuts with cut-formula $\neg(\alpha\impl\beta)$ in doing this transformation. Suppose we started with the proof below.  
\[
    \begin{prooftree}
        \hypo{\neg(\alpha\impl\beta)} 
        \hypo{}
        \ellipsis{$\rho$}{\alpha\impl\beta}
        \infer2[\rnpair]{\bot}
        \ellipsis{$\mu$}{\bot}
    \end{prooftree}
\]
If we apply the same transformation as earlier, then we would be introducing a new cut. Instead, we must transform the  proof as follows.
\[
    \begin{prooftree}
        \hypo{\neg\beta}
        \hypo{}
        \ellipsis{$\rho$}{\alpha\impl\beta}
        \hypo{}
        \ellipsis{$\nu$}{\alpha}
        \infer2[\rnapp]{\beta}
        \infer2[\rnapp]{\bot}
        \ellipsis{$\mu$}{\bot}
    \end{prooftree}
\]

There is a further complication -- $\rho$ itself might end with $\rnraa{}$. 
\[
    \begin{prooftree}
        \hypo{\neg(\alpha\impl\beta)} 
        \closedAss{x}{\neg(\alpha\impl\beta)}
        \ellipsis{$\rho$}{\bot}
        \taggedRaa{x}{\alpha\impl\beta}
        \infer2[\rnpair]{\bot}
        \ellipsis{$\mu$}{\bot}
    \end{prooftree}
\]
We must transform this as follows.
\[
   \begin{prooftree}
        \hypo{\neg(\alpha\impl\beta)} 
        \ellipsis{$\rho$}{\bot}
        \ellipsis{$\mu$}{\bot}
    \end{prooftree}
\]
What is happening here? The open assumption $\neg(\alpha\impl\beta)$ has been eliminated, but the \emph{closed} instance of the same assumption has become open now, since we removed an $\rnraa{\alpha\impl\beta}$ rule. To handle this in general, we must first apply the transformation to $\rho$ recursively. This is formalized in the next section.

We presented the $\rnapp$ case as an illustration, but there are similar cuts that happen when main subproof of a cut ends with $\rnraa{}$, and the cut itself ends with $\rnfst$ or $\rnsnd$ or $\rncase$. In those cases too, we follow a similar transformation as in the $\rnapp$ case. There are two further transformations we perform. 
\begin{itemize}
    \item Any proof of the form $\rnraa{\bot}(\mu)$ is converted to $\mu'$, which is obtained by replacing all assumptions of $\neg\bot$ in $\mu$ with $\rnabs{\bot}(\rnax{\bot})$ (taking care to not introduce new cuts on $\neg\bot$). 
    \item Any proof of the form $\rnraa{\neg\alpha}(\mu)$ is converted to $\rnabs{\alpha}(\mu')$, where $\mu'$ is obtained by replacing all assumptions of $\neg\neg\alpha$ in $\mu$ with proofs of $\neg\neg\alpha$ from assumption $\alpha$. This transformation is crucial for obtaining a clean proof of the subformula property.
\end{itemize}

With these motivations in place, we must decide on the degree of cuts involving the $\rnraa{\alpha\impl\beta}$ rule, say. We cannot just take it to be $\formsize{\alpha\impl\beta}$ since the transformations described above might potentially introduce cuts with cut-formula $\alpha\impl\beta$. Since we are discharging assumptions of the form $\neg(\alpha\impl\beta)$ we could choose $\formsize{\neg(\alpha\impl\beta)} = \formsize{\alpha} + \formsize{\beta} + 3$ as a the degree of such a cut. But with that choice, we would run into a problem elsewhere! When we graft a proof $\rho$ with conclusion $\alpha$ on to a proof $\pi$, we might end up creating cuts involving $\rnraa{\alpha}$, and we need to ensure that the cut rank after grafting remains within bounds. Careful consideration leads us to choose $\formsize{\alpha} + \formsize{\beta} + 2$ as a happy medium. Similarly, we choose the degree of the cut $\rnraa{\bot}(\mu)$ to be $2$, and that of $\rnraa{\neg\alpha}$ to be $\formsize{\neg\neg\alpha} = \formsize{\alpha} + 4$.\footnote{It is possible that there is some degree of freedom and that these numbers can be tweaked without harming any of the proofs concerning the decrease of cut ranks. But these choices work, and we stop there. Another interesting point -- these considerations came to the fore only while trying to formalize the proof in Lean. In earlier versions of our proof, we had the wrong definition of $\degof$ and our proofs were wrong. Thus, the Lean formalization has been of immense help in even discovering these measures and getting these proofs right, much less reaffirming our faith in them.} 

We hope that the above discussion adequately motivates the (quite involved) transformations which are formally presented in the next section. 

\section{Normalization for \texorpdfstring{$\classlogic$}{\textsf{NK}}}~\label{sec:nkNorm}
As mentioned earlier, there are no cuts whose main proof end in $\rncase$, and there are new cuts involving $\rnraa{}$ instead of $\rnemp{}$. We first present the definition of cuts, degree, and cutrank. 
\begin{definition}[Cuts and cut-rank]
    The following table lists the patterns that define a cut $\mu$, along with its \defemph{cut-formula} (denoted $\cutform(\mu)$), \defemph{degree} (denoted $\degree(\mu)$), and \defemph{weight} (denoted $\cutwt(\mu)$).
    \begin{longtable}{C|C|C|C|C}
        \mu & \text{Preonditions} & \text{Cut-formula} & \text{Degree} & \text{Weight} \\
        \hline 
        \hline
        \rnraa{\bot}(\nu) & \text{None} & \neg\bot & 2 & 1 \\ 
        \rnraa{\neg\alpha}(\nu) & \text{None} & \neg\neg\alpha & \formsize{\alpha} + 4 & 1 \\ 
        \hline
        \multirow{2}{*}{$\rnrule(\nu, \ldots)$} & \rnrule \in \{\rnapp, \rnfst, \rnsnd, \rncase\} & \multirow{2}{*}{$\concof(\nu)$} & \multirow{2}{*}{$\formsize{\concof(\nu)}$} & \multirow{2}{*}{$1$} \\
        & \lrof(\nu) \in \{\rnabs{}, \rnpair, \rnleft{}, \rnright{}\} & & & \\
        \hline        
        \multirow{2}{*}{$\rnrule(\nu, \ldots)$} & \rnrule \in \{\rnapp, \rnfst, \rnsnd, \rncase\} & \multirow{2}{*}{$\neg\alpha$} & \multirow{2}{*}{$\formsize{\alpha} + 2$} & \multirow{2}{*}{$1$} \\
        & \lrof(\nu)  = \rnraa{\alpha} & & & \\
    \end{longtable}
    If $\mu$ is not a cut, we still define $\degree(\mu)$ and $\cutwt(\mu)$ to be $0$. The \defemph{cut-rank} of a proof $\pi$, denoted $\cutrank(\pi)$, is defined (as before) to be $(0,0)$ when $\pi = \rnax{}$, and $(\degree(\pi),\cutwt(\pi)) \rankadd \cutrank(\pi_{1}) \rankadd \cdots \rankadd \cutrank(\pi_{k})$ when $\pi = \rnrule(\pi_{1}, \ldots, \pi_{k})$. A proof $\pi$ is \defemph{normal} if $\cutrank(\pi) = (0,0)$. Note that $\pi$ is normal iff none of its subproofs is a cut.
\end{definition}
The graft operation, $\graft{\pi}{\alpha}{\rho}$, is defined exactly as before. As before, it is easy to see that if $\alpha\notin{X}$, then $\graft{\pi}{\alpha}{\rho} = \pi$, and also that $\assof(\graft{\pi}{\alpha}{\rho}) \subseteq Y \cup (X-\alpha)$ and $\concof(\graft{\pi}{\alpha}{\rho}) = \concof(\pi)$. The statement regarding cut-rank of a graft is modified slightly, but the proof is very similar to the one for $\intlogic$. 

\begin{lemma}[Cut-rank of a graft]\label{lem:graft-cr-nk}
    Suppose $\pi$ and $\rho$ are proofs with $\concof(\rho) = \alpha$, $\cutrank(\pi), \cutrank(\rho) < (d,0)$, and $\formsize{\alpha} + 2 \le d$. Then $\cutrank(\graft{\pi}{\alpha}{\rho}) < (d,0)$.
\end{lemma}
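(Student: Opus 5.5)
We argue by structural induction on $\pi$, exactly as in Lemma~\ref{lem:graft-cr}, writing $\overline{\pi}$ for $\graft{\pi}{\alpha}{\rho}$. If $\pi = \rnax{\alpha}$, then $\overline{\pi} = \rho$, which has cut-rank $< (d,0)$ by hypothesis. If $\pi = \rnax{\beta}$ with $\beta \neq \alpha$, then $\overline{\pi} = \pi$ and there is nothing to show. Otherwise $\pi = \rnrule(\pi_{1}, \ldots, \pi_{k})$ and $\overline{\pi} = \rnrule(\overline{\pi_{1}}, \ldots, \overline{\pi_{k}})$. By the induction hypothesis each $\cutrank(\overline{\pi_{i}}) < (d,0)$, so by the definition of $\rankadd$ it suffices to show that $\degree(\overline{\pi}) < d$.

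We split on whether the root pattern changes.

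First, suppose $\lrof(\pi_{1}) = \lrof(\overline{\pi_{1}})$, or $\pi$ has no main subproof. Then whether the root of $\overline{\pi}$ is a cut is determined by the same data as for $\pi$: the last rule of $\pi$, together with its subscript for $\rnraa{}$, and the last rule of the main subproof, together with the index $\alpha'$ of $\rnraa{\alpha'}$ if that is the main subproof's rule. Grafting preserves conclusions and does not touch the rule labels. Hence $\overline{\pi}$ is a cut iff $\pi$ is, and with the same cut-formula and degree. This covers the two $\rnraa{}$ root patterns, $\rnraa{\bot}$ and $\rnraa{\neg\beta}$, whose degrees depend only on the subscript. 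In this case $\degree(\overline{\pi}) = \degree(\pi) < d$ by the hypothesis on $\cutrank(\pi)$.

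Second, suppose $\lrof(\pi_{1}) \neq \lrof(\overline{\pi_{1}})$. Then necessarily $\pi_{1} = \rnax{\alpha}$ and $\overline{\pi_{1}} = \rho$, and the root of $\overline{\pi}$ is an elimination ($\rnapp$, $\rnfst$, $\rnsnd$ or $\rncase$) whose main subproof is $\rho$, with conclusion $\alpha$. If $\overline{\pi}$ is a cut, there are two possibilities.
\begin{itemize}
\item If $\lrof(\rho)$ is an introduction rule, the degree is $\formsize{\alpha} < \formsize{\alpha}+2 \le d$.
\item If $\lrof(\rho) = \rnraa{\alpha}$, the degree is $\formsize{\alpha} + 2 \le d$.
\end{itemize}
The second possibility is the one place where the strict inequality would fail. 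The hypothesis only gives $\formsize{\alpha}+2 \le d$, so this case needs a separate argument.

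The main obstacle is this $\rnraa{}$-headed $\rho$ in the boundary case $\formsize{\alpha}+2 = d$. I expect the intended reading of $\cutrank(\rho) < (d,0)$ to be what closes it, and I would check this first. If $\rho = \rnraa{\alpha}(\rho')$ with $\alpha$ of the form $\neg\beta$, then $\rho$ is itself a cut of degree $\formsize{\beta}+4 = \formsize{\alpha}+2$, so $\cutrank(\rho) < (d,0)$ already forces $\formsize{\alpha}+2 < d$. If $\alpha = \bot$, then $\rho$ is a cut of degree $2 = \formsize{\bot}+1$, which does not suffice, but in that case the root of $\overline{\pi}$ cannot be an elimination cut on $\bot$, since no elimination has main premise $\bot$. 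The remaining case is $\alpha$ an implication, conjunction or disjunction. Here I would verify against the degree table whether the statement should be read as $\formsize{\alpha}+2 < d$, or whether the freedom noted in the footnote about tweaking degrees accounts for it. Failing that, I would strengthen the hypothesis to $\formsize{\alpha}+2 < d$ exactly where it is needed and confirm that all later applications of the lemma, such as contracting $\rnapp(\rnabs{\alpha}(\mu),\nu)$ of degree $\formsize{\alpha}+\formsize{\beta}+1 \ge \formsize{\alpha}+2$, still meet it.
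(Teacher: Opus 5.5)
Your induction skeleton is the intended one; the paper only says the $\intlogic$ proof carries over. You have also found the right crux, and your diagnosis is correct. If the degree table is read literally, so that an elimination whose main subproof ends in $\rnraa{\gamma}$ has degree $\formsize{\gamma}+2$, the statement is false. Take $\pi = \rnfst(\rnax{p\conj q})$, $\rho = \rnraa{p\conj q}(\rnax{\bot})$ and $d = 5 = \formsize{p\conj q}+2$. Both proofs are normal, yet $\graft{\pi}{p\conj q}{\rho} = \rnfst(\rnraa{p\conj q}(\rnax{\bot}))$ is a cut of degree $5$.

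However, the proposal stops short of a proof, and the fallback repair you suggest would fail. The lemma is needed exactly at the boundary. The cut $\rnapp(\rnabs{\alpha}(\mu),\nu)$ with $\concof(\mu)$ atomic or $\bot$ has degree $\formsize{\alpha}+2$, so the strict hypothesis $\formsize{\alpha}+2<d$ is never available there. Under the literal table, the contraction lemma itself then breaks. For example, $\rnapp(\rnabs{p\conj q}(\rnapp(\rnax{\neg p},\rnfst(\rnax{p\conj q}))),\rnraa{p\conj q}(\rnax{\bot}))$ has cut-rank $(5,1)$, and so does its contractum $\rnapp(\rnax{\neg p},\rnfst(\rnraa{p\conj q}(\rnax{\bot})))$.

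The missing idea is that the repair belongs in the degree, not in the lemma. The table entry conflicts with the paper's own motivation in Section~\ref{sec:nk}. There the paper rejects $\formsize{\neg(\alpha\impl\beta)}$ precisely because of grafting, and fixes the degree of $\rnapp(\rnraa{\alpha\impl\beta}(\mu),\nu)$ at $\formsize{\alpha}+\formsize{\beta}+2 = \formsize{\alpha\impl\beta}+1$. In general, then, an elimination whose main subproof ends in $\rnraa{\gamma}$ should have degree $\formsize{\gamma}+1$; the table's $\formsize{\alpha}+2$ appears to be a slip.

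With that reading, your second case closes in one line. The new root cut has degree $\formsize{\alpha}$ or $\formsize{\alpha}+1$, both $<\formsize{\alpha}+2\le d$, and the argument is verbatim the $\intlogic$ one. Your separate treatment of $\alpha=\neg\beta$ and $\alpha=\bot$ is correct but becomes unnecessary.

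This choice of degree is also consistent with the $\rnraa{}$ contractions. For a cut of degree $d=\formsize{\alpha\circ\beta}+1$, one still has $\formsize{\alpha\circ\beta}<d$, as the eliminator lemmas require. A new $\rnraa{\gamma}$ root on a component $\gamma$ has degree at most $\formsize{\gamma}+2<d$.
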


For the transformations involving $\rnraa{}$, we introduce the notion of an \emph{eliminator}.

\begin{definition}[Eliminators]
    An \defemph{eliminator} is any element $\eliminator$ of the set
    {
        \setlength{\abovedisplayskip}{1pt}
        \setlength{\belowdisplayskip}{1pt}
        \[
            \left\{\elimbot, \elimneg{\alpha}, \elimapp{\alpha\impl\beta}{\mu}, \elimfst{\alpha\conj\beta}, \elimsnd{\alpha\conj\beta}, \elimcase{\alpha\disj\beta}{\mu, \nu}\right\}
        \] 
    }
    where:
    \begin{itemize}[nosep]
        \item $\alpha, \beta$ are formulas, and $\mu, \nu$ are proofs.
        \item In $\elimapp{\alpha\impl\beta}{\mu}$, $\concof(\mu) = \alpha$.
        \item In $\elimcase{\alpha\disj\beta}{\mu,\nu}$, $\concof(\mu) = \concof(\nu) = \bot$.
    \end{itemize}
\end{definition}

The idea is that each eliminator can be applied on a proof to obtain another proof with the same conclusion, and with some open assumptions replaced by simpler assumptions. We formalize the application of eliminators (and contraction) first, and then prove the cutrank reduction lemmas. The application of an eliminator $\eliminator$ to a proof $\pi$ is denoted $\elimtrans{\pi}{\eliminator}$, and is defined by pattern matching. To reduce clutter, we adopt the standard convention that a rule on a line applies only if all rules in the previous lines do not apply. For instance, in the table for $\elimbot$, line 2 applies only if $\gamma \neq \neg\bot$, line 4 applies only if $\lrof(\pi_{2}) \neq \rnraa{\bot}$, and line 5 applies only if $\rnrule \neq \rnapp$ or $\lrof(\pi_{1}) \neq \rnax{\neg\bot}$. Similarly for the other tables.

\begin{definition}[Applying $\elimbot$]
    We define $\elimtrans{\pi}{\elimbot}$ in the table below. It can be proved by a routine induction that for any proof $\pi$ of $X \vdash \phi$, $\assof(\elimtrans{\pi}{\elimbot}) \subseteq X - \neg\bot$ and $\concof(\elimtrans{\pi}{\elimbot}) = \phi$. For ease of notation, we use $\overline{\pi}$ to denote $\elimtrans{\pi}{\elimbot}$ in the table below.
    {
        \setcounter{rownum}{0}
        \begin{longtable}{l|C|C}
            & \pi & \overline{\pi} \\ 
            \hline 
            \hline
            \rownum. & \rnax{\neg\bot} & \rnabs{\bot}(\rnax{\bot}) \\             
            \rownum. & \rnax{\gamma} &  \rnax{\gamma} \\ 
            \rownum. & \rnapp(\rnax{\neg\bot}, \rnraa{\bot}(\pi_{2})) & \overline{\pi_{2}} \\ 
            \rownum. & \rnapp(\rnax{\neg\bot}, \pi_{2}) & \overline{\pi_{2}} \\ 
            \rownum. & \rnrule(\pi_{1}, \ldots, \pi_{k}) & \rnrule(\overline{\pi_{1}}, \ldots, \overline{\pi_{k}}) \\
        \end{longtable}
    }
    It can be proved by a routine induction that for any proof $\pi$ of $X \vdash \phi$, $\assof(\overline{\pi}) \subseteq X \setminus \{\neg\bot\}$ and $\concof(\overline{\pi}) = \phi$. 
\end{definition}
\begin{definition}[Applying $\elimneg{\alpha}$]
    We define $\elimtrans{\pi}{\elimneg{\alpha}}$ in the table below. For ease of notation, we use $\overline{\pi}$ to denote $\elimtrans{\pi}{\elimneg{\alpha}}$ in the table below.
    {
        \setcounter{rownum}{0}
        \begin{longtable}{l|C|C}
            & \pi & \overline{\pi} \\ 
            \hline 
            \hline
            \rownum. & \rnax{\neg\neg\alpha} & \rnabs{\neg\alpha}(\rnapp(\rnax{\neg\alpha}, \rnax{\alpha}))\\             
            \rownum. & \rnax{\gamma} & \rnax{\gamma} \\ 
            \rownum. & \rnapp(\rnax{\neg\neg\alpha}, \rnraa{\neg\alpha}(\pi_{2})) & \overline{\pi_{2}} \\ 
            \rownum. & \rnapp(\rnax{\neg\neg\alpha}, \pi_{2}) & \rnapp(\overline{\pi_{2}}, \rnax{\alpha}) \\ 
            \rownum. & \rnrule(\pi_{1}, \ldots, \pi_{k}) & \rnrule(\overline{\pi_{1}}, \ldots, \overline{\pi_{k}}) \\
        \end{longtable}
    }
    It can be proved by a routine induction that for any proof $\pi$ of $X \vdash \phi$, $\assof(\overline{\pi}) \subseteq (X \cup \{\alpha\}) \setminus \{\neg\neg\alpha\}$ and $\concof(\overline{\pi}) = \phi$. 
\end{definition}
\begin{definition}[Applying $\elimapp{\alpha\impl\beta}{\mu}$]
    We define $\elimtrans{\pi}{\elimapp{\alpha\impl\beta}{\mu}}$ in the table below. For ease of notation, we use $\overline{\pi}$ to denote $\elimtrans{\pi}{\elimapp{\alpha\impl\beta}{\mu}}$ in the table below.
    {
        \setcounter{rownum}{0}
        \begin{longtable}{l|C|C}
            & \pi & \overline{\pi} \\ 
            \hline 
            \hline
            \rownum. & \rnax{\neg(\alpha\impl\beta)} & \rnabs{\alpha\impl\beta}(\rnapp(\rnax{\neg\beta}, \rnapp(\rnax{\alpha\impl\beta}, \mu))) \\             
            \rownum. & \rnax{\gamma} & \rnax{\gamma} \\ 
            \rownum. & \rnapp(\rnax{\neg(\alpha\impl\beta)}, \rnraa{\alpha\impl\beta}(\pi_{2})) & \overline{\pi_{2}} \\ 
            \rownum. & \rnapp(\rnax{\neg(\alpha\impl\beta)}, \pi_{2}) & \rnapp(\rnax{\neg\beta}, \rnapp(\overline{\pi_{2}}, \mu)) \\ 
            \rownum. & \rnrule(\pi_{1}, \ldots, \pi_{k}) & \rnrule(\overline{\pi_{1}}, \ldots, \overline{\pi_{k}}) \\
        \end{longtable}
    }
    It can be proved by a routine induction that for any proof $\pi$ of $X \vdash \phi$, $\assof(\overline{\pi}) \subseteq (X \setminus \{\neg(\alpha\impl\beta)\}) \cup \assof(\mu) \cup \{\neg\beta\}$ and $\concof(\overline{\pi}) = \phi$. 
\end{definition}
\begin{definition}[Applying $\elimfst{\alpha\conj\beta}$]
    We define $\elimtrans{\pi}{\elimfst{\alpha\conj\beta}}$ in the table below. For ease of notation, we use $\overline{\pi}$ to denote $\elimtrans{\pi}{\elimfst{\alpha\conj\beta}}$ in the table below.
    {
        \setcounter{rownum}{0}
        \begin{longtable}{l|C|C}
            & \pi & \overline{\pi} \\ 
            \hline 
            \hline
            \rownum. & \rnax{\neg(\alpha\conj\beta)} & \rnabs{\alpha\conj\beta}(\rnapp(\rnax{\neg\alpha}, \rnfst(\rnax{\alpha\conj\beta}))) \\             
            \rownum. & \rnax{\gamma} & \rnax{\gamma} \\ 
            \rownum. & \rnapp(\rnax{\neg(\alpha\conj\beta)}, \rnraa{\alpha\conj\beta}(\pi_{2})) & \overline{\pi_{2}} \\ 
            \rownum. & \rnapp(\rnax{\neg(\alpha\conj\beta)}, \pi_{2}) & \rnapp(\rnax{\neg\alpha}, \rnfst(\overline{\pi_{2}})) \\ 
            \rownum. & \rnrule(\pi_{1}, \ldots, \pi_{k}) & \rnrule(\overline{\pi_{1}}, \ldots, \overline{\pi_{k}}) \\
        \end{longtable}
    }
    It can be proved by a routine induction that for any proof $\pi$ of $X \vdash \phi$, $\assof(\overline{\pi}) \subseteq (X \cup \{\neg\alpha\}) \setminus \{\neg(\alpha\conj\beta)\}$ and $\concof(\overline{\pi}) = \phi$. 
\end{definition}
\begin{definition}[Applying $\elimsnd{\alpha\conj\beta}$]
    We define $\elimtrans{\pi}{\elimsnd{\alpha\conj\beta}}$ in the table below. For ease of notation, we use $\overline{\pi}$ to denote $\elimtrans{\pi}{\elimsnd{\alpha\conj\beta}}$ in the table below.
    {
        \setcounter{rownum}{0}
        \begin{longtable}{l|C|C}
            & \pi & \overline{\pi} \\ 
            \hline 
            \hline
            \rownum. & \rnax{\neg(\alpha\conj\beta)} & \rnabs{\alpha\conj\beta}(\rnapp(\rnax{\neg\beta}, \rnsnd(\rnax{\alpha\conj\beta}))) \\             
            \rownum. & \rnax{\gamma} & \rnax{\gamma} \\ 
            \rownum. & \rnapp(\rnax{\neg(\alpha\conj\beta)}, \rnraa{\alpha\conj\beta}(\pi_{2})) & \overline{\pi_{2}} \\ 
            \rownum. & \rnapp(\rnax{\neg(\alpha\conj\beta)}, \pi_{2}) & \rnapp(\rnax{\neg\beta}, \rnsnd(\overline{\pi_{2}})) \\ 
            \rownum. & \rnrule(\pi_{1}, \ldots, \pi_{k}) & \rnrule(\overline{\pi_{1}}, \ldots, \overline{\pi_{k}}) \\
        \end{longtable}
    }
    It can be proved by a routine induction that for any proof $\pi$ of $X \vdash \phi$, $\assof(\overline{\pi}) \subseteq (X \cup \{\neg\beta\}) \setminus \{\neg(\alpha\conj\beta)\}$ and $\concof(\overline{\pi}) = \phi$. 
\end{definition}
\begin{definition}[Applying $\elimcase{\alpha\disj\beta}{\mu,\nu}$]
    We define $\elimtrans{\pi}{\elimcase{\alpha\disj\beta}{\mu,\nu}}$ in the table below. For ease of notation, we use $\overline{\pi}$ to denote $\elimtrans{\pi}{\elimcase{\alpha\disj\beta}{\mu,\nu}}$ in the table below.
    {
        \setcounter{rownum}{0}
        \begin{longtable}{l|C|C}
            & \pi & \overline{\pi} \\ 
            \hline 
            \hline
            \rownum. & \rnax{\neg(\alpha\disj\beta)} & \rnabs{\alpha\disj\beta}(\rncase(\rnax{\alpha\disj\beta}, \mu, \nu)) \\             
            \rownum. & \rnax{\gamma} & \rnax{\gamma} \\ 
            \rownum. & \rnapp(\rnax{\neg(\alpha\disj\beta)}, \rnraa{\alpha\disj\beta}(\pi_{2})) & \overline{\pi_{2}} \\ 
            \rownum. & \rnapp(\rnax{\neg(\alpha\disj\beta)}, \pi_{2}) & \rncase(\overline{\pi_{2}}, \mu, \nu) \\ 
            \rownum. & \rnrule(\pi_{1}, \ldots, \pi_{k}) & \rnrule(\overline{\pi_{1}}, \ldots, \overline{\pi_{k}}) \\
        \end{longtable}
    }
    It can be proved by a routine induction that for any proof $\pi$ of $X \vdash \phi$, $\assof(\overline{\pi}) \subseteq (X \setminus \{\neg(\alpha\disj\beta)\}) \cup (\assof(\mu)\setminus\{\alpha\}) \cup (\assof(\nu)\setminus\{\beta\})$ and $\concof(\overline{\pi}) = \phi$. 
\end{definition}
\begin{definition}[Contractum]\label{def:ctm-nk}
    For any cut $\pi$, we define its \defemph{contractum}, denoted $\contractum(\pi)$. 
    {\small
    \setcounter{rownum}{0}
    \begin{longtable}{l|CC|C}
        & \pi & & \contractum(\pi) \\ 
        \hline
        \rownum. & \rnapp(\rnabs{\alpha}(\mu), \nu) & & \graft{\mu}{\alpha}{\nu} \\ 
        \rownum. & \rnfst(\rnpair(\mu, \nu)) & & \mu \\ 
        \rownum. & \rnsnd(\rnpair(\mu, \nu)) & & \nu \\ 
        \rownum. & \rncase(\rnleft{\beta}(\chi), \mu, \nu) & (\concof(\chi) = \alpha) & \graft{\mu}{\alpha}{\chi} \\ 
        \rownum. & \rncase(\rnright{\alpha}(\chi), \mu, \nu) & (\concof(\chi) = \beta) & \graft{\nu}{\beta}{\chi} \\ 
        \rownum. & \rnraa{\bot}(\mu) & & \elimtrans{\mu}{\elimbot} \\ 
        \rownum. & \rnraa{\neg\neg\alpha}(\mu) & & \rnabs{\alpha}(\elimtrans{\mu}{\elimneg{\alpha}}) \\ 
        \rownum. & \rnapp(\rnraa{\alpha\impl\beta}(\mu), \nu) & & \rnraa{\beta}\left(\elimtrans{\mu}{\elimapp{\alpha\impl\beta}{\nu}}\right) \\ 
        \rownum. & \rnfst(\rnraa{\alpha\conj\beta}(\mu)) & & \rnraa{\alpha}\left(\elimtrans{\mu}{\elimfst{\alpha\conj\beta}}\right) \\ 
        \rownum. & \rnsnd(\rnraa{\alpha\conj\beta}(\mu)) & & \rnraa{\beta}\left(\elimtrans{\mu}{\elimsnd{\alpha\conj\beta}}\right) \\ 
        \rownum. & \rncase(\rnraa{\alpha\disj\beta}(\chi), \mu, \nu) & & \elimtrans{\chi}{\elimcase{\alpha\disj\beta}{\mu,\nu}} \\
    \end{longtable}
    }
\end{definition}

We now show that applying eliminators and contracting both lead to a decrease in cut rank, under appropriate conditions. Where the structure of the induction is very similar to the intuitionistic logic case, we skip the proofs. We first state and prove the statement related to applying $\elimapp{\alpha\impl\beta}{\mu}$. There are similar statements we can prove for other eliminators, but we present only this as a representative case. 
\begin{lemma}[Cut-rank on applying $\elimapp{\alpha\impl\beta}{\mu}$]\label{lem:cr-red-elimapp}
    Suppose $d \in \natural$, $\alpha, \beta$ are formulas and $\mu$ a proof of $\alpha$ s.t.\ $\formsize{\alpha\impl\beta} < d$ and $\cutrank(\mu) < (d,0)$. For any proof $\pi$, if $\cutrank(\pi) < (d,0)$, then $\cutrank(\elimtrans{\pi}{\elimapp{\alpha\impl\beta}{\mu}}) < (d,0)$.
\end{lemma}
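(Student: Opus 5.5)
We argue by induction on the structure of $\pi$, following the clauses in the definition of $\elimtrans{\pi}{\elimapp{\alpha\impl\beta}{\mu}}$ in order, and writing $\overline{\pi}$ for the result. The goal in each case is that every cut occurring in $\overline{\pi}$ has degree $< d$. Such a cut is either a cut already inside a subproof of the form $\overline{\pi_i}$, which is handled by the induction hypothesis since $\cutrank(\pi_i) \le \cutrank(\pi) < (d,0)$; or a cut inside the copy of $\mu$, which is handled by the hypothesis on $\mu$; or a cut freshly created at one of the few new nodes. So the whole argument reduces to bounding the degrees of the fresh nodes.

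The clauses split as follows.

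\begin{itemize}
\item \textbf{Clause 1.} The output is $\rnabs{\alpha\impl\beta}(\rnapp(\rnax{\neg\beta}, \rnapp(\rnax{\alpha\impl\beta}, \mu)))$. Both $\rnapp$ nodes have a main subproof ending in $\rnax{}$, so neither is a cut. The only cuts are therefore those of $\mu$.
\item \textbf{Clause 2.} Trivial.
\item \textbf{Clause 3.} We have $\overline{\pi} = \overline{\pi_2}$, which is covered by the induction hypothesis applied to $\pi_2$ (the body of $\rnraa{\alpha\impl\beta}$).
\item \textbf{Clause 4.} The output is $\rnapp(\rnax{\neg\beta}, \rnapp(\overline{\pi_2}, \mu))$. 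The outer node is not a cut. The inner node $\rnapp(\overline{\pi_2},\mu)$ may be a cut, since $\overline{\pi_2}$ has conclusion $\alpha\impl\beta$ and could end in $\rnabs{}$ or $\rnraa{\alpha\impl\beta}$. Its degree is then $\formsize{\alpha\impl\beta}$ or $\formsize{\alpha}+\formsize{\beta}+2 = \formsize{\alpha\impl\beta}+1$. The hypothesis $\formsize{\alpha\impl\beta} < d$ gives $\formsize{\alpha\impl\beta}+1 \le d$, so this does \emph{not} yet give strict inequality. We must look more carefully, as explained below.
\end{itemize}

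This is the step I expect to be the main obstacle: the case where $\overline{\pi_2}$ ends in $\rnraa{\alpha\impl\beta}$ in Clause 4. By the ordering convention, Clause 4 applies only when $\pi_2$ itself does not end in $\rnraa{\alpha\impl\beta}$, since that situation is caught by Clause 3. So I would prove an auxiliary fact alongside the main induction: applying the eliminator preserves the last rule unless $\pi$ is an assumption or matches Clauses 3 or 4. Checking this, $\lrof(\overline{\pi})$ is $\rnabs{}$ in Clause 1, $\rnapp$ in Clause 4, and equals $\lrof(\pi)$ in Clause 5. Clause 3 can change it, but then $\pi$ had conclusion $\bot$. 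Since $\pi_2$ has conclusion $\alpha\impl\beta \neq \bot$, Clause 3 cannot apply to $\pi_2$. Hence $\overline{\pi_2}$ ends in $\rnraa{}$ only if $\pi_2$ does, which is excluded. Similarly, $\overline{\pi_2}$ can end in $\rnabs{}$ only if $\pi_2$ is $\rnax{\neg(\alpha\impl\beta)}$, which has the wrong conclusion, or if $\pi_2$ ends in $\rnabs{}$. In the latter case $\rnapp(\overline{\pi_2},\mu)$ is a cut of degree $\formsize{\alpha\impl\beta} < d$, which is fine.

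\textbf{Clause 5.} Here $\overline{\pi} = \rnrule(\overline{\pi_1},\ldots,\overline{\pi_k})$, and the argument mirrors Lemma~\ref{lem:graft-cr}. If $\lrof(\overline{\pi_1}) = \lrof(\pi_1)$, then $\overline{\pi}$ is a cut iff $\pi$ is, with the same cut-formula, so its degree is $< d$. Otherwise, by the auxiliary fact, $\pi_1$ is $\rnax{\neg(\alpha\impl\beta)}$ (Clause 1), or $\pi_1$ matched Clause 4, or $\pi_1$ matched Clause 3.
\begin{itemize}
\item In the Clause 4 subcase, $\overline{\pi_1}$ ends in $\rnapp$, which creates no cut above it.
\item In the Clause 1 subcase, $\overline{\pi_1}$ ends in $\rnabs{\alpha\impl\beta}$ with conclusion $\neg(\alpha\impl\beta)$. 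The only elimination of $\neg(\alpha\impl\beta)$ as a main premise is $\rnapp$, and $\rnapp(\rnax{\neg(\alpha\impl\beta)},\pi_2)$ is exactly what Clauses 3 and 4 intercept. So no such cut arises in Clause 5.
\item In the Clause 3 subcase, $\pi_1$ has conclusion $\bot$, so $\overline{\pi_1}$ has conclusion $\bot$ and cannot be the main premise of a cut.
\end{itemize}
The bookkeeping of last rules in this clause is the delicate part, and I would verify it by an explicit case split on $\rnrule$.
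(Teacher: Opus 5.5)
Your proposal is correct and takes essentially the same route as the paper. It uses a clause-by-clause structural induction supported by a last-rule-preservation fact (the paper's $(\dagger)$), with the same two key observations: in Clause 4, $\overline{\pi_2}$ cannot end in $\rnraa{}$ because $\pi_2$ does not and has conclusion $\alpha\impl\beta \neq \bot$; and in Clause 5, a main premise $\rnax{\neg(\alpha\impl\beta)}$ could only be consumed by $\rnapp$, which Clauses 3 and 4 already intercept.
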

\begin{proof}
    Throughout this proof, we will refer to $\elimtrans{\pi}{\elimapp{\alpha\impl\beta}{\mu}}$ as $\overline{\pi}$, for ease of notation. We first show by a routine induction that for all proofs $\pi$, we have:
    \begin{equation}
        \lrof(\overline{\pi}) = \lrof(\pi) \text{ or } (\pi = \rnax{\neg(\alpha\impl\beta)} \text{ and } \lrof(\overline{\pi}) = \rnabs{\alpha\impl\beta}) \text{ or } \concof(\pi) = \concof(\overline{\pi}) = \bot \tag{$\dagger$}. \label{eq:dagger}
    \end{equation}

    The main proof is by induction on the structure of $\pi$. 
    \begin{itemize}[nosep]
        \item Suppose $\pi = \rnax{\neg(\alpha\impl\beta)}$. Then $\overline{\pi}$ is $\rnabs{\alpha\impl\beta}(\rnapp(\rnax{\neg\beta}, \rnapp(\rnax{\alpha\impl\beta}, \mu)))$. Any cut in $\overline{\pi}$ occurs inside $\mu$, and its degree is $< d$ by assumption. So $\cutrank(\overline{\pi}) < (d,0)$.
        \item Suppose $\pi = \rnax{\gamma}$ with $\gamma \neq \neg(\alpha\impl\beta)$. In this case $\overline{\pi}$ is the normal proof $\rnax{\gamma}$. 
        \item Suppose $\pi = \rnapp(\rnax{\neg(\alpha\impl\beta)}, \rnraa{\alpha\impl\beta}(\pi_{2}))$, in which case $\overline{\pi} = \overline{\pi_{2}}$. Now $\pi_{2}$ has cut-rank $< (d,0)$ by assumption and $\sizeof(\pi_{2}) < \sizeof(\pi)$. Therefore $\overline{\pi_{2}}$ (and hence $\overline{\pi}$) has cut-rank $< (d,0)$ by IH.  
        \item Suppose $\pi = \rnapp(\rnax{\neg(\alpha\impl\beta)}, \pi_{2})$, with $\lrof(\pi_{2}) \neq \rnraa{}$. 
        In this case, $\overline{\pi}$ is $\rnapp(\rnax{\neg\beta}, \rnapp(\overline{\pi_{2}}, \mu))$. The cuts occurring in $\overline{\pi}$ are those occurring inside $\overline{\pi_{2}}$ (which have degree $< d$ by induction hypothesis) or those occurring inside $\mu$ (which have degree $< d$ by assumption) or $\rnapp(\overline{\pi_{2}}, \mu)$ itself. In the last case, the degree of the cut is $\formsize{\alpha\impl\beta} < d$, unless $\lrof(\overline{\pi_{2}}) = \rnraa{\alpha\impl\beta}$. But in such a case, $\lrof(\pi_{2}) = \lrof(\overline{\pi_{2}})$, which contradicts our assumption above that $\lrof(\pi_{2}) \neq \rnraa{}$.
        \item Suppose $\pi = \rnrule(\pi_{1}, \ldots, \pi_{k})$ where $\rnrule \neq \rnapp$ or $\pi_{1} \neq \rnax{\neg(\alpha\impl\beta)}$. We have $\overline{\pi} = \rnrule(\overline{\pi_{1}}, \ldots, \overline{\pi_{k}})$ and each $\overline{\pi_{i}}$ has cut-rank $< (d,0)$, by induction hypothesis. We consider the following two cases. 
        \begin{description}[nosep]
            \item[$\pi_{1} \neq \rnax{\neg(\alpha\impl\beta)}$:] From~(\ref{eq:dagger}) it follows that $\lrof(\pi_{1}) = \lrof(\overline{\pi_{1}})$, and we see that $\overline{\pi}$ is a cut iff $\pi$ is, and their degree is $< d$, by assumption on $\pi$. 
             \item[$\pi_{1} = \rnax{\neg(\alpha\impl\beta)}$:] From~(\ref{eq:dagger}), we have that $\lrof(\overline{\pi_{1}}) = \rnabs{\alpha\impl\beta}$. It is not possible to have $\rnrule \in \{\rnraa{}, \rnfst, \rnsnd, \rncase\}$ when the conclusion of the main subproof is an implication. We also have $\rnrule \neq \rnapp$, and thus $\overline{\pi}$ is not a cut.
        \end{description}
        Thus $\overline{\pi}$ itself is either not a cut or is a cut of degree $< d$, and all its proper subproofs have cut-rank $< (d,0)$. Thus we have that $\cutrank(\overline{\pi}) < (d,0)$.
        \qedhere
    \end{itemize}
\end{proof}

We now state the important statement that under appropriate conditions, applying a contraction leads to a decrease in cut-rank. We omit the proof, as the proof for cuts not involving $\rnraa{}$ are the same as for $\intlogic$ (and we in fact avoid all complications arising from have the main subproof of a cut ending in $\rncase$). For cuts involving $\rnraa{}$, we appeal Lemma~\ref{lem:cr-red-elimapp} (and its companions) to prove the desired result. 
\begin{lemma}[Cut-rank of a contractum]\label{lem:cr-red-ctm-nk}
    Suppose $\pi$ is a cut of degree $d$, and all its ancilliary subproofs have cut-rank $< (d,0)$. Suppose further that $\lrof(\upsilon) = \rncase$ or $\cutrank(\upsilon) < (d,0)$, where $\upsilon$ is the main subproof of $\pi$. Then $\cutrank(\contractum(\pi)) < \cutrank(\pi)$. 
\end{lemma}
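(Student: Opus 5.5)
We argue by case analysis on the shape of the cut $\pi$, following the rows of Definition~\ref{def:ctm-nk}, with $\cutrank(\pi) = (d,n)$, $d = \degree(\pi)$. In $\classlogic$ the main subproof of a cut never ends in $\rncase$: $\rncase$ has conclusion $\bot$, which is not the conclusion of a main subproof of any elimination. So the hypothesis gives $\cutrank(\upsilon) < (d,0)$ for the main subproof $\upsilon$. Together with the hypothesis on ancilliary subproofs, every immediate subproof of $\pi$ has cut-rank $< (d,0)$. It therefore suffices to show $\cutrank(\contractum(\pi)) < (d,0)$ in every case, which is strictly below $\cutrank(\pi) \ge (d,1)$. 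No weight bookkeeping is needed.

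For rows 1--5 (cuts not involving $\rnraa{}$), the argument is the $\intlogic$ one. In rows 2 and 3 the contractum is an immediate subsubproof, so its cut-rank is $< (d,0)$. In rows 1, 4 and 5 we invoke Lemma~\ref{lem:graft-cr-nk}, and must check $\formsize{\alpha}+2 \le d$. For $\rnapp(\rnabs{\alpha}(\mu),\nu)$ we have $d = \formsize{\alpha}+\formsize{\beta}+1 \ge \formsize{\alpha}+2$. The $\disj$ cases are symmetric.

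For rows 8--10 we use Lemma~\ref{lem:cr-red-elimapp} and its companion lemmas for $\elimfst{}$, $\elimsnd{}$ and $\elimcase{}{}$, whose analogous statements we assume. In row 8 the degree is $d = \formsize{\alpha\impl\beta}+2 > \formsize{\alpha\impl\beta}$ and $\cutrank(\nu) < (d,0)$. Also $\cutrank(\mu) < (d,0)$, since $\mu$ is a subproof of $\upsilon = \rnraa{\alpha\impl\beta}(\mu)$. So $\elimtrans{\mu}{\elimapp{\alpha\impl\beta}{\nu}}$ has cut-rank $< (d,0)$. It remains to handle the new root $\rnraa{\beta}(\cdots)$. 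This is a cut only if $\beta = \bot$ or $\beta = \neg\gamma$, with degree $2$ or $\formsize{\gamma}+4 = \formsize{\beta}+2$ respectively. Both are $< \formsize{\alpha}+\formsize{\beta}+3 = d$. Rows 9 and 10 are analogous, using $\formsize{\alpha}+2 < \formsize{\alpha\conj\beta}+2$. Row 11 has no new root, and the companion lemma for $\elimcase{\alpha\disj\beta}{\mu,\nu}$ requires $\cutrank(\mu),\cutrank(\nu) < (d,0)$, which are the ancilliary hypotheses.

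For rows 6 and 7 we use the companion lemmas for $\elimbot$ and $\elimneg{\alpha}$, with $d = 2$ and $d = \formsize{\alpha}+4$ respectively. Row 7 then needs a check of the new root $\rnabs{\alpha}(\cdots)$, which is never a cut.

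The main obstacle lies inside these companion lemmas rather than the case split. Under $\elimneg{\alpha}$, line 4 produces $\rnapp(\overline{\pi_2}, \rnax{\alpha})$, which is a cut when $\overline{\pi_2}$ ends in $\rnabs{}$ or $\rnraa{\neg\alpha}$. This is excluded only by the earlier-line convention (line 3 has consumed the $\rnraa{}$ case) together with the analogue of (\ref{eq:dagger}). For $\elimbot$, line 1 inserts $\rnabs{\bot}(\rnax{\bot})$, and we must verify that it never lands in a main position. Line 4 handles the only such position. Since a main subproof with conclusion $\neg\bot$ occurs only under $\rnapp$, this check reduces to that line.
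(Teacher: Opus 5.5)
Your case split is the route the paper sketches (it omits the proof itself: rows 1--5 as in $\intlogic$, rows 6--11 via Lemma~\ref{lem:cr-red-elimapp} and its companions). Reducing everything to showing $\cutrank(\contractum(\pi))<(d,0)$ is also fine. The problem is that rows 1, 4, 5 and row 8 use incompatible degree conventions.

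In row 8 you give an elimination whose main premise ends in $\rnraa{\gamma}$ the degree $\formsize{\gamma}+2$, as in the paper's table. Under that convention, Lemma~\ref{lem:graft-cr-nk} with hypothesis $\formsize{\alpha}+2\le d$ is false. Grafting a $\rho$ that ends in $\rnraa{\alpha}$ onto a main-premise occurrence of $\rnax{\alpha}$ creates a cut of degree exactly $\formsize{\alpha}+2$, and this can equal $d$. Row 1 then genuinely fails. Take $\pi=\rnapp(\rnabs{p\conj q}(\rnfst(\rnax{p\conj q})),\,\rnraa{p\conj q}(\rnax{\bot}))$:
\begin{itemize}
\item Its only cut is the root, on $(p\conj q)\impl p$, so $\cutrank(\pi)=(5,1)$ and all hypotheses of the lemma hold.
\item But $\contractum(\pi)=\rnfst(\rnraa{p\conj q}(\rnax{\bot}))$ is a cut of degree $\formsize{p\conj q}+2=5$, so the cut-rank stays $(5,1)$.
\end{itemize}
Your check $\formsize{\alpha}+\formsize{\beta}+1\ge\formsize{\alpha}+2$ is exactly the inequality that is one too weak. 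The clash is already in the paper, between its table and its prose, but your proof inherits it.

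The repair is the convention in the paper's prose: degree $\formsize{\alpha}+\formsize{\beta}+2$ for $\rnraa{\alpha\impl\beta}$, i.e.\ $\formsize{\gamma}+1$ for such cuts. Grafting then creates cuts of degree at most $\formsize{\alpha}+1<d$, so Lemma~\ref{lem:graft-cr-nk} holds as stated and your rows 1, 4, 5 go through unchanged. Rows 8--10 survive with $d=\formsize{\alpha}+\formsize{\beta}+2$. The new $\rnraa{}$ root has degree at most $\formsize{\beta}+2$ (respectively $\formsize{\alpha}+2$), which is still $<d$. Rows 6, 7 and 11 are unaffected.

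Two smaller points:
\begin{itemize}
\item In $\elimtrans{\cdot}{\elimneg{\alpha}}$, line~4 does produce a cut when $\overline{\pi_2}$ ends in $\rnabs{}$. That case is not excluded; it simply has degree $\formsize{\neg\alpha}<d$.
\item In rows 6--7, your bound $\cutrank(\mu)<(d,0)$ needs the premise of $\rnraa{}$ to count as ancillary, as the paper's literal definition of main subproof gives. That premise has conclusion $\bot$ and can end in $\rncase$. If it were read as main, the disjunct $\lrof(\upsilon)=\rncase$ would admit $\rnraa{\bot}(\rncase(\ldots))$ with a cut of degree $>2$ inside and no $\neg\bot$ assumptions. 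Its contractum is the premise itself, with the same cut-rank, so the lemma would fail.
\end{itemize}
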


The next step after defining contraction is to define one-step reduction. The situation is considerably simplified since $\rncase$ cannot be the main subproof of a cut, and so one-step reduction is essentially ``contract an uppermost cut of degree $d$, where the proof has cut-rank $(d,\dot{})$''.
\begin{definition}[One-step reduction]\label{def:one-step-red-nk}
    For proofs $\pi$ and $\overline{\pi}$ with $\cutrank(\pi) = (d,n)$, we say that $\pi \oneStep \overline{\pi}$ iff one of the conditions in the table below holds:
    \setcounter{rownum}{0}
    \begin{longtable}{l|C|C|C}
        & \pi & \text{Preconditions} & \overline{\pi} \\ 
        \hline 
        \hline
        \rownum. & \rnrule(\pi_{1}, \ldots, \pi_{i}, \ldots, \pi_{k}) & \degree(\pi) = d,\quad \cutrank(\pi_{1}), \ldots, \cutrank(\pi_{k}) < (d,0), & \contractum(\pi) \\ 
        \hline  
        \rownum. & \rnrule(\pi_{1}, \ldots, \pi_{k}) & \cutrank(\pi_{i}) \ge (d,0),\quad \pi_{i} \oneStep \overline{\pi_{i}} & \rnrule(\pi_{1}, \ldots, \overline{\pi_{i}}, \ldots, \pi_{k}) \\  
        \end{longtable}
\end{definition}
The equivalents of Lemma~\ref{lem:red-exists}, Lemma~\ref{lem:cr-red-red}, and finally Theorem~\ref{thm:normal} are proved very similarly. We restate the main theorem, for the sake of completeness.

\begin{theorem}[Normalization]\label{thm:normal-nk}
    If\, $X \classDerives \alpha$, there is a normal proof of\, $Y \vdash \alpha$ for some $Y \subseteq X$. 
\end{theorem}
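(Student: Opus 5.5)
We argue exactly as in Theorem~\ref{thm:normal}, by well-founded induction on $\cutrank(\pi)$ under the lexicographic order. The claim is that every $\classlogic$ proof $\pi$ can be turned into a normal proof $\pi^{0}$ with $\assof(\pi^{0}) \subseteq \assof(\pi)$ and $\concof(\pi^{0}) = \concof(\pi)$. Since $X \classDerives \alpha$ gives a proof of $Y' \vdash \alpha$ with $Y' \subseteq X$, this yields the theorem with $Y = \assof(\pi^{0}) \subseteq Y' \subseteq X$. If $\pi$ is normal we take $\pi^{0} = \pi$. Otherwise we use the $\classlogic$ analogues of Lemma~\ref{lem:red-exists} and Lemma~\ref{lem:cr-red-red}, for the relation of Definition~\ref{def:one-step-red-nk}, to obtain some $\overline{\pi}$ with $\pi \oneStep \overline{\pi}$ and smaller cut-rank, and then apply the induction hypothesis to $\overline{\pi}$.

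\emph{Existence of a step.} We proceed by structural induction. Let $\pi = \rnrule(\pi_{1},\ldots,\pi_{k})$ have $\cutrank(\pi) = (d,n)$ with $d>0$. If some immediate subproof has $\cutrank(\pi_{i}) \ge (d,0)$, then $\pi_{i}$ is non-normal, the induction hypothesis gives a step for it, and rule~2 applies. Otherwise every $\cutrank(\pi_{i}) < (d,0)$, so $\degree(\pi) = d$ is forced by the definition of $\rankadd$, and rule~1 applies. This case is simpler than in $\intlogic$ because there is no special treatment of main subproofs ending in $\rncase$.

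\emph{Decrease of cut-rank and preservation of assumptions and conclusion.} For rule~1 we invoke Lemma~\ref{lem:cr-red-ctm-nk}, whose hypotheses hold because all immediate subproofs have cut-rank $<(d,0)$. For assumptions and conclusion we need the $\classlogic$ version of Lemma~\ref{lem:ass-ctm}, checked row by row in Definition~\ref{def:ctm-nk} using the assumption bounds stated for each eliminator. Row~8 needs care. The eliminator introduces $\neg\beta$ and $\assof(\nu)$ and removes $\neg(\alpha\impl\beta)$, and $\rnraa{\beta}$ then discharges $\neg\beta$. The result is $(\assof(\mu)-\neg(\alpha\impl\beta)) \cup \assof(\nu) = \assof(\pi)$. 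Row~11 works similarly, since the $\alpha,\beta$ discharged by $\rncase$ are removed. Row~7 yields $\alpha$, which $\rnabs{\alpha}$ then discharges.

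For rule~2 the argument mirrors the rule~3 case of Lemma~\ref{lem:cr-red-red}. Replacing $\pi_{i}$ by $\overline{\pi_{i}}$ lowers that component's rank by the induction hypothesis, so it remains to ensure that the root's own pair $(\degree,\cutwt)$ does not rise to at least $(d,0)$ except by staying equal. The delicate subcase is $i=1$. If $\pi_{1}$ was itself contracted at its root, $\lrof(\overline{\pi_{1}})$ may change, for instance to $\rnraa{}$ in rows~8--10 or to $\rnabs{}$ in row~7. Then $\overline{\pi}$ may become a new cut whose cut-formula is $\concof(\pi_{1})$.

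This is the main obstacle. We must show that such a new root cut has degree $<d$ or equals the old one. The plan is to prove a $\classlogic$ analogue of the ``furthermore'' clause of Lemma~\ref{lem:ass-ctm}: for every cut $\pi_{1}$ of degree $d$, either $\lrof(\contractum(\pi_{1})) = \lrof(\pi_{1})$, or every cut with main subproof $\contractum(\pi_{1})$ has degree $<d$. This is verified by cases. In rows~8--10 the new last rule is $\rnraa{\gamma}$ with $\gamma$ a proper subformula of the old cut formula, so any cut above it has degree $\formsize{\gamma}+2 \le \formsize{\alpha\impl\beta}+1$. This must be compared with $d = \formsize{\alpha\impl\beta}+2$ and is strictly smaller. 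In row~7, $d = \formsize{\alpha}+4$ and the new cut formula is $\neg\alpha$ of size $\formsize{\alpha}+2 < d$. Row~6 produces a proof of $\bot$, whose only possible further cut is $\rnraa{\bot}$, with $2 \le d$. In that situation the new root cut equals the old one in degree and weight, which suffices, as in the $\rnemp{\bot}$ case of $\intlogic$. Rows~1--5 are handled by the observation that the conclusion's size is $<d$, so any cut on it has degree at most its size plus~2. This last point is exactly where the degree choices, such as $\formsize{\alpha}+2$ for $\rnraa{}$-cuts, must be checked carefully, and we expect a borderline inequality there to be the hardest verification.
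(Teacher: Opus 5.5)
Your skeleton is exactly the paper's: well-founded induction on cut-rank, driven by the $\classlogic$ analogues of Lemmas~\ref{lem:red-exists} and~\ref{lem:cr-red-red}. Your existence-of-a-step argument is correct.

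\textbf{The gap.} It is the ``borderline inequality'' you leave open for rows~1--3. Under the convention you actually use, it is false. You give an elimination whose main premise ends in $\rnraa{\gamma}$ degree $\formsize{\gamma}+2$, as in your row-8 value $d=\formsize{\alpha\impl\beta}+2$. In rows~1--3 you only get $\formsize{\concof(\pi_{1})}\le d-2$, so a new root cut of degree $\formsize{\concof(\pi_{1})}+2$ can have degree exactly $d$.

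For instance, let $\pi_{1}=\rnapp(\rnabs{p}(\rnraa{q\impl r}(\rnax{\bot})),\rnax{p})$ and $\pi=\rnapp(\pi_{1},\rnax{q})$.
\begin{itemize}
\item The only cut in $\pi$ is $\pi_{1}$, of degree $\formsize{p\impl(q\impl r)}=5$, so $\cutrank(\pi)=(5,1)$.
\item Rule~2 fires with $i=1$ and produces $\overline{\pi}=\rnapp(\rnraa{q\impl r}(\rnax{\bot}),\rnax{q})$.
\item The root of $\overline{\pi}$ is a cut of degree $\formsize{q\impl r}+2=5$, so $\cutrank(\overline{\pi})=(5,1)$ and the measure does not decrease.
\end{itemize}
The same convention also breaks the rule-1 step you delegate to Lemma~\ref{lem:cr-red-ctm-nk}. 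Lemma~\ref{lem:graft-cr-nk} fails for $\pi=\rnapp(\rnax{p\impl q},\rnax{p})$, $\rho=\rnraa{p\impl q}(\rnax{\bot})$ and $d=5$, because the graft is a cut of degree $5$.

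\textbf{The fix.} The missing idea is the right degree for these cuts: an elimination whose main premise ends in $\rnraa{\gamma}$ must have degree $\formsize{\gamma}+1$. This is what the paper's discussion prescribes: $\formsize{\alpha}+\formsize{\beta}+2$ for $\rnraa{\alpha\impl\beta}$, explicitly rejecting $\formsize{\neg(\alpha\impl\beta)}$. It is also what the hypothesis $\formsize{\alpha}+2\le d$ of Lemma~\ref{lem:graft-cr-nk} presupposes. The table's entry $\formsize{\alpha}+2$ amounts to the rejected value.

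With $+1$ your case analysis closes:
\begin{itemize}
\item In rows~1--3 the contractum's conclusion $\gamma$ is a proper subformula of the cut formula. So $\formsize{\gamma}\le d-2$, and any new root cut has degree at most $\formsize{\gamma}+1<d$.
\item In rows~4--6 and~11 the conclusion is $\bot$, which is never the main premise of an elimination. Whether $\rnraa{\delta}(\cdot)$ is a cut does not depend on its premise, so the root's pair is unchanged.
\item Row~7 gives $\formsize{\alpha}+2<\formsize{\alpha}+4$.
\item Rows~8--10 give $\formsize{\beta}+1<\formsize{\alpha\impl\beta}+1$ and its $\conj$ analogues.
\end{itemize}

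\textbf{A small correction.} In row~8 you only get $\assof(\contractum(\pi))\subseteq\assof(\pi)$, not equality. The eliminator bound is only an upper bound, and $\rnraa{\beta}$ may discharge pre-existing $\neg\beta$ assumptions. Inclusion is all the theorem needs.
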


The final order of business is to prove a version of the Subformula Property. Since the $\rnraa{\alpha}$ rule discharges instances of the assumption $\neg\alpha$, we cannot work with the definition of $\sfof$ as given earlier. We need to work with $\clof(\phi)$. We first define $\overline{\alpha}$ as follows:
\[
    \overline{\alpha} \coloneq 
    \begin{cases}
        \beta & \text{if }\alpha = \neg\beta \\ 
        \neg\alpha & \text{if }\alpha\text{ is not of the form }\neg\beta \\
    \end{cases}
\]
\begin{itemize}[nosep]
    \item $\clof(p) \coloneq \{p,\overline{p},\bot,\overline{\bot}\}$. 
    \item $\clof(\bot) \coloneq \{\bot,\overline{\bot}\}$. 
    \item $\clof(\alpha\circ\beta) \coloneq \{\alpha\circ\beta, \overline{\alpha\circ\beta}\} \cup \clof(\alpha) \cup \clof(\beta)$.
\end{itemize}

\begin{theorem}[Subformula Property]
    If $\pi$ is a normal proof, then $\formsof(\pi) \subseteq \clof(\assof(\pi) \cup \{\concof(\pi)\})$. Furthermore, if $\pi$ is neutral, then $\formsof(\pi) \subseteq \clof(\assof(\pi))$. 
\end{theorem}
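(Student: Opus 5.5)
The plan is to follow the structure of the proof of the Subformula Property for $\intlogic$: induction on the structure of the normal proof $\pi$, proving both claims together. Before that I will establish three closure facts about $\clof$, each by induction on formulas.
\begin{enumerate*}[label=(\alph*)]
\item If $\gamma \in \clof(\delta)$ then $\clof(\gamma) \subseteq \clof(\delta)$. In particular $\clof$ is closed under immediate subformulas and under $\gamma \mapsto \overline{\gamma}$. The only non-obvious check is $\clof(\overline{\gamma}) \subseteq \clof(\gamma)$. For $\gamma = \neg\beta$ this is $\clof(\beta) \subseteq \clof(\neg\beta)$. Otherwise it is $\clof(\neg\gamma) = \{\neg\gamma,\gamma\} \cup \clof(\gamma) \cup \clof(\bot) \subseteq \clof(\gamma)$, which uses (b).
\item $\{\bot, \overline{\bot}\} \subseteq \clof(\gamma)$ for every $\gamma$.
\item If $\alpha \neq \bot$ and $\alpha$ is not of the form $\neg\beta$, then $\neg\alpha = \overline{\alpha} \in \clof(\alpha)$.
\end{enumerate*}
Fact (c) explains why $\rnraa{\bot}$ and $\rnraa{\neg\beta}$ were declared to be cuts: in a normal proof, every $\rnraa{\alpha}$ discharges an assumption $\neg\alpha$ that already lies in $\clof(\alpha)$.

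Next I will record a structural observation about normal proofs. If $\pi = \rnrule(\pi_{1},\ldots)$ with $\rnrule \in \{\rnapp,\rnfst,\rnsnd,\rncase\}$ and $\pi$ is normal, then the main subproof $\pi_{1}$ is neutral. The reason is that $\lrof(\pi_{1})$ cannot be an introduction rule or $\rnraa{}$, since $\pi$ would then be a cut. It also cannot be $\rncase$, because that rule has conclusion $\bot$, which is not an implication, conjunction or disjunction. Hence $\lrof(\pi_{1}) \in \{\rnax{},\rnapp,\rnfst,\rnsnd\}$. Similarly, a normal proof with conclusion $\bot$ ends in $\rnax{}$, $\rnapp$, $\rnfst$, $\rnsnd$ or $\rncase$. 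It cannot end in $\rnraa{\bot}$, since that is a cut, and the introduction rules do not produce $\bot$.

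The induction then proceeds case by case, with $X = \assof(\pi)$.
\begin{description}[nosep]
\item[$\rnax{}$] Immediate.
\item[$\rnabs{}$, $\rnpair$, $\rnleft{}$, $\rnright{}$] As in $\intlogic$, using (a). The neutral claim is vacuous here, since the conclusion is not $\bot$.
\item[$\rnapp$, $\rnfst$, $\rnsnd$] The main subproof is neutral, so its formulas lie in $\clof$ of its assumptions. In particular its conclusion ($\alpha\impl\beta$ or $\alpha\conj\beta$) lies there, and (a) yields $\alpha,\beta$. The ancillary subproof of $\rnapp$ is handled by the general IH together with $\alpha \in \clof(X)$. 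This gives the stronger neutral bound $\formsof(\pi) \subseteq \clof(X)$.
\item[$\rncase(\chi,\mu,\nu)$] The subproof $\chi$ is neutral, so $\alpha\disj\beta$, and hence $\alpha$ and $\beta$, lie in $\clof(\assof(\chi))$. The subproofs $\mu,\nu$ have conclusion $\bot$ and are therefore neutral, so $\formsof(\mu) \subseteq \clof(\assof(\mu)) \subseteq \clof((\assof(\mu)-\alpha) \cup \{\alpha\}) \subseteq \clof(X)$. The same holds for $\nu$.
\item[$\rnraa{\alpha}(\mu)$] Normality forces $\alpha \neq \bot$ and $\alpha \neq \neg\beta$. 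Since $\mu$ has conclusion $\bot$, it is neutral, so $\formsof(\mu) \subseteq \clof(\assof(\mu)) \subseteq \clof((\assof(\mu)-\neg\alpha) \cup \{\neg\alpha\}) \subseteq \clof(X \cup \{\alpha\})$ by (c) and (a). The neutral claim is vacuous here, since the conclusion $\alpha$ is not $\bot$.
\end{description}

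The step I expect to need the most care is the neutral claim for proofs whose conclusion is $\bot$. There we must show $\bot \in \clof(X)$ without the conclusion to help, and a priori $X$ could be empty. The fix is that every neutral proof in the list above has a neutral main subproof (or is an assumption), so the conclusion of a neutral proof always lands in $\clof$ of its assumptions via the main-branch chain. This covers $\bot$, and it also gives $\bot \in \clof(X)$ through (b) whenever $X \neq \emptyset$. If the plain induction runs into trouble here, my first attempt will be to strengthen the statement to: if $\pi$ is normal and neutral, then $\concof(\pi) \in \clof(\assof(\pi))$.
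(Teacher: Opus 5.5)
Your proof is correct and takes the paper's route. The paper only sketches it: a structural induction on normal proofs mirroring the $\intlogic$ argument, where the key point is that normality excludes $\rnraa{\bot}$ and $\rnraa{\neg\beta}$, so every assumption $\neg\alpha$ discharged by $\rnraa{\alpha}$ is $\overline{\alpha}\in\clof(\alpha)$. Your closure facts (a)--(c) make explicit what the paper leaves implicit, and the worry in your final paragraph is unnecessary, since your case analysis already gives $\formsof(\pi)\subseteq\clof(\assof(\pi))$ directly in every neutral case (e.g.\ for $\rncase$, $\bot\in\formsof(\mu)\subseteq\clof(X)$).
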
 

The subformula property is proved by induction, similar to the corresponding theorem for $\intlogic$. We use the fact that in a normal proof, we do not discharge assumptions of the form $\neg\neg\alpha$ (unless $\neg\neg\alpha$ is part of the conclusion). We have also defined $\clof$ carefully so as not add double negations unless they are already part of the open assumptions or conclusion. 

\section{Conclusion}~\label{sec:conc}
We can adapt our proofs to first-order logic. For intuitionistic first-order logic, if $\pi$ is a cut with main subproof $\mu$, and the last rule of $\mu$ is $\exists$-elimination, then we define $\degree(\pi) = \formsize{\concof(\mu)}$ and $\cutwt(\pi) = \sizeof(\mu)$. For the one step rule, we treat $\exists$-elimination like $\rncase$. The universal quantifier is treated as a generalization of conjunction. For classical first-order logic, we consider a restricted $\exists$-elimination where the conclusion is always $\bot$. We leave the details to be worked out in future, but believe that our proof approach would extend to first-order logic without much change. 

In conclusion, we have presented what we believe to be new and succinct proofs of fundamental results in proof theory -- weak normalization for intutionistic and classical propositional logic. The ideas in our proofs are implicit in earlier versions, but we have not encountered an explicit formulation matching ours. We believe that our proofs can be covered in a lecture or two in a proof theory course, almost in its entirety. We hope that this work contributes to the pedagogy of proof theory, and reaches a wider audience. The Lean formalization lends more credence to our proof, and is of interest in its own right. It has the potential to be an ingredient in the formalization of various theories built on the top of $\intlogic$ or $\classlogic$.

\end{document}